\newcommand{\ExAnteRelaxation}{\textsf{Ex Ante Relaxation}}
\newcommand{\EAR}{{\sf EAR}}
\providecommand{\keywords}[1]
{
  \small	
  \textbf{Keywords---} #1
}
\providecommand{\JEL}[1]
{
  \small	
  \textbf{JEL---} #1
}
\title{Anonymous Pricing in Large Markets\thanks{This work has evolved over a couple of
years and has benefited from comments from many colleagues.
We are especially grateful to Yiding Feng, Shunhua Jiang, Pinyan Lu, Tim Roughgarden, and Hengjie Zhang for discussions that inspired this work.
Yaonan Jin acknowledges financial support from the HKUST Start-up Grant, and Yingkai Li acknowledges financial support from the NUS Start-up Grant.}}
\author{
Yaonan Jin\thanks{Hong Kong University of Science and Technology. Email: {\tt jinyaonan1996@gmail.com}}
\and
Yingkai Li\thanks{National University of Singapore. Email: {\tt yk.li@nus.edu.sg}}
}
\date{}
\begin{document}
\newcommand{\setsize}[1]{{\left|#1\right|}}

\newcommand{\floor}[1]{
{\lfloor {#1} \rfloor}
}
\newcommand{\bigfloor}[1]{
{\left\lfloor {#1} \right\rfloor}
}

%
%
\newcommand{\given}{\,|\,}

\newcommand{\prob}[2][]{\text{\bf Pr}\ifthenelse{\not\equal{}{#1}}{_{#1}}{}\!\left[{\def\givenn{\middle|}#2}\right]}
\newcommand{\expect}[2][]{\text{\bf E}\ifthenelse{\not\equal{}{#1}}{_{#1}}{}\!\left[{\def\givenn{\middle|}#2}\right]}

\newcommand{\tparen}{\big}
\newcommand{\tprob}[2][]{\text{\bf Pr}\ifthenelse{\not\equal{}{#1}}{_{#1}}{}\tparen[{\def\given{\tparen|}#2}\tparen]}
\newcommand{\texpect}[2][]{\text{\bf E}\ifthenelse{\not\equal{}{#1}}{_{#1}}{}\tparen[{\def\given{\tparen|}#2}\tparen]}

\newcommand{\sprob}[2][]{\text{\bf Pr}\ifthenelse{\not\equal{}{#1}}{_{#1}}{}[#2]}
\newcommand{\sexpect}[2][]{\text{\bf E}\ifthenelse{\not\equal{}{#1}}{_{#1}}{}[#2]}

\newcommand{\rbr}[1]{\left(\,#1\,\right)}
\newcommand{\sbr}[1]{\left[\,#1\,\right]}
\newcommand{\cbr}[1]{\left\{\,#1\,\right\}}

\newcommand{\suchthat}{\,:\,}

\newcommand{\partialx}[2][]{{\tfrac{\partial #1}{\partial #2}}}
\newcommand{\nicepartialx}[2][]{{\nicefrac{\partial #1}{\partial #2}}}
\newcommand{\dd}{\,{\mathrm d}}
\newcommand{\ddx}[2][]{{\tfrac{\dd #1}{\dd #2}}}
\newcommand{\niceddx}[2][]{{\nicefrac{\dd #1}{\dd #2}}}
\newcommand{\grad}{\nabla}

\newcommand{\symdiff}{\triangle}

\newcommand{\abs}[1]{\left| #1 \right |}

\newcommand{\reals}{{\mathbb R}}
\newcommand{\posreals}{\reals_+}
\newcommand{\indicate}[1]{\mathbf{1}\left[\,#1\,\right]}

\newcommand{\primed}{^{\dagger}}

\newcommand{\val}{v}
\newcommand{\alloc}{x}
\newcommand{\pay}{p}
\newcommand{\util}{u}
\newcommand{\dist}{F}
\newcommand{\density}{f}
\newcommand{\quant}{q}
\newcommand{\rev}{R}
\newcommand{\iron}{\bar{\rev}}
\newcommand{\mquant}{\quant^*}
\newcommand{\mval}{\val^*}
\newcommand{\mrev}{r^*}
\newcommand{\opt}{{\rm OPT}}
\newcommand{\tri}{{\rm Tri}}
\newcommand{\up}{{\rm AP}}
\newcommand{\spp}{{\rm SPP}}
\newcommand{\os}{D}
\newcommand{\virtual}{\phi}
\newcommand{\virtualce}{\virtual_{\rm CE}}
\newcommand{\mech}{M}
\newcommand{\apname}{anonymous}
\newcommand{\aapname}{an anonymous}
\newcommand{\Apname}{Anonymous}
\newcommand{\sppname}{sequential posted}
\newcommand{\Sppname}{Sequential Posted}

\maketitle
\begin{abstract}
We study revenue maximization when a seller offers $k$ identical units to ex ante heterogeneous, unit-demand buyers. While \apname~pricing can be $\Theta(\log k)$ worse than optimal in general multi-unit environments, we show that this pessimism disappears in large markets, where no single buyer accounts for a non-negligible share of optimal revenue. Under (quasi-)regularity, \apname~pricing achieves a $2+O(1/\sqrt{k})$ approximation to the optimal mechanism; the worst-case ratio is maximized at about $2.47$ when $k=1$ and converges to $2$ as $k$ grows. This indicates that the gains from third-degree price discrimination are mild in large markets. 
\end{abstract}

\noindent\keywords{\apname~pricing, third-degree price discrimination, large markets, approximation}

\noindent\JEL{D44,D47,D83}
\thispagestyle{empty}

\newpage
\setcounter{page}{1}

\section{Introduction}
\label{sec:intro}

In single-item auctions, a remarkably robust insight has emerged from recent theory: simple \apname~pricing mechanisms can perform nearly as well as the fully optimal mechanism.
In particular, \citet{AHNPY19} show that under standard regularity conditions, \apname~pricing achieves a constant-factor approximation to the optimal revenue despite the heterogeneity of their preferences.\footnote{The tight approximation ratio is derived in \citet{JLQTX19,JLTX20} and is approximately 2.62.}
This result provides a compelling theoretical justification for the widespread use of \apname~pricing in practice, where simplicity, transparency, and ease of implementation are often paramount.

However, this positive conclusion hinges critically on the single-item environment.
When the seller supplies multiple identical units to agents with unit demands, the performance of \apname~pricing deteriorates sharply.
As shown by \citet{HR09,JJLZ22}, in $k$-unit auctions, \apname~pricing can incur a worst-case multiplicative revenue loss on the order of $\Theta(\log k)$ relative to the optimal mechanism.
This gap grows unbounded as the supply expands, suggesting that the robustness of \apname~pricing is fundamentally limited in multi-unit environments.

This limitation is not merely theoretical.
In many economically important markets, sellers routinely face large and variable supplies rather than a single indivisible object.
Retailers sell inventories consisting of millions of identical units; digital marketplaces distribute standardized goods to large populations of buyers; platforms allocate advertising impressions or data access rights at scale; and technology firms license software to consumers in large volumes.
In such settings, the supply parameter $k$ is naturally large, precisely where the worst-case inefficiency of \apname~pricing becomes most pronounced.

In this paper, we recover the approximate optimality of \apname~pricing in multi-unit auctions with a large supply of $k$, under the large market assumption. 
To provide intuition about the assumption and our main approximation result, we would like to provide a simple illustration of the loss of \apname~pricing in multi-unit auctions.
\begin{example}
Consider selling $k$ units to $k$ agents. Each agent $i$ has a value that is deterministic at $\frac{1}{i}$. 
The optimal mechanism sells a unit to agent $i$ at a price of $\frac{1}{i}$, with a revenue of $\sum_{i\in[k]} \frac{1}{i} = \Theta(\log k)$. 
However, for \apname~pricing mechanisms, the number of sales at any price $\frac{1}{i}$ is $i$, with a revenue of $1$. 
\end{example}

In this example, the large gap arises because revenue is extremely concentrated among a vanishingly small set of agents.
A single agent contributes revenue of order one, while the average revenue per unit is only $O(\log k / k)$.
As a result, the optimal mechanism relies on finely tailored prices that extract surplus from a few high-value agents, whereas \apname~pricing is forced to trade off revenue from these agents against sales to a large mass of low-value buyers.

Such extreme revenue concentration is precisely what our large-market assumption rules out.
In competitive markets with many potential buyers, no single agent—or small subset of agents—accounts for a non-negligible share of total revenue.
Instead, revenue is dispersed across many agents, and marginal contributions are small relative to aggregate market size.
Under this natural assumption, the inefficiency illustrated by the example is mitigated, and \apname~pricing mechanisms recover constant-factor approximations even when supply is large. 
Moreover, our results extend to environments with quasi-regular distributions \citep{feng2025beyond}, which is a generalization of the standard regularity assumption \citep{M81,BR89}.
This provides a theoretical explanation for why simple \apname~pricing mechanisms continue to perform well in large markets despite pessimistic worst-case bounds.

\subsection{Results and Techniques}
Our paper considers the problem of selling $k$ units of identical items to $n$ ex ante heterogeneous agents with unit demands for the items. 
The goal of the seller is to maximize the expected revenue. 
We assume that the value distributions of the agents are (quasi-)regular, as in \citet{M81,AHNPY19,feng2025beyond}. 

A crucial assumption in our model is the large market assumption. 
Formally, a market is $\epsilon$-large for $\epsilon>0$ if, for any agent $i\in[n]$, the optimal revenue from a single agent $i$ cannot exceed $\epsilon$ fraction of the optimal revenue from all $n$ agents. 
This assumption captures the competitiveness in the market, and in our paper, we will mainly focus on the large market limit where $\epsilon\to 0$. 

\medskip

\noindent\textbf{Main Theorem.} \emph{Under large market and (quasi-)regularity assumptions, for $k$-unit auctions, \apname~pricing achieves a $(2+\frac{O(1)}{\sqrt{k}})$-approximation to the optimal revenue. 
Moreover, the worst case approximation ratio is maximized at $2.47$ when $k=1$, and is at least $2$ for any $k\geq 2$.}
\medskip

The proof relies on a reduction to the truncated Pareto distribution as a worst case scenario and a novel methodology for quantifying the order statistics which is used for bounding the worst case gap. 

\medskip\noindent\textbf{Reduction to worst case instances.} We show that in $k$-unit auctions, under large market and regularity assumptions, we can decompose the value distribution of any single agent into a group of smaller agents, each endowed with a truncated Pareto distribution. 
Note that truncated Pareto distributions are often referred to as triangular distributions in the literature \citep[e.g.,][]{AHNPY19}, where the revenue curves of those distributions take a triangular shape. 

In our construction, we create truncated Pareto distributions such that the revenue contribution of these agents remains almost the same under both optimal revenue mechanisms and \apname~pricing mechanisms. 
Moreover, by splitting the agents into smaller ones, the large market assumption remains valid as the revenue contribution from each individual agent can only decrease after the split. 
Therefore, we show that the worst case scenario takes the form of truncated Pareto distributions (\cref{prop:triangle_is_worst_case}). 

One benefit of the worst case reduction is that each truncated Pareto distribution is determined only by two parameters, and the revenue optimal mechanism for those distributions is a sequential posted pricing mechanism. 
This allows us to represent both the optimal revenue and the \apname~pricing mechanism as a function of the order statistics, similar to \citet{JLTX20,JJLZ22}. 

\medskip\noindent\textbf{Order statistics in large markets.} 
In our paper, we show that under the large market assumption, all higher order statistics can essentially be pinned down by the first order statistics (\cref{prop:order_statistics}). 
The key observation is that, at any fixed price, the event that a given buyer is willing to pay that price is typically rare in a large market. The number of buyers who exceed the price is therefore the sum of many ``small'' independent chances. 
In such settings, a standard large-market heuristic is that this count behaves almost like a simple random counting process determined by a single aggregate parameter—roughly, the overall likelihood that nobody exceeds the price. Once that aggregate likelihood is known, the probability of seeing at most one exceedance, at most two exceedances, and so on (which are exactly the higher-order statistics) are essentially pinned down. 
The only reason this heuristic is not exact is that there can be minor dependencies created by ``double-counting'' the same buyer when one approximates distinct exceedances by repeated sampling, but in large markets, each buyer's chance of exceeding the price is so small that such collisions are negligible. This is why, under our large-market assumption, the entire collection of higher-order statistics is well-approximated by a function of the first-order statistic alone.

\medskip

With the characterization of the order statistics in large markets, the revenue from \apname~pricing, the optimal revenue, and the worst case gap can be uniquely determined by the distribution over first order statistics. 
This allows us to derive the worst case approximation in closed form. 
Our main result then follows from bounding the asymptotic order of the worst case approximation. 

\subsection{Related Work}
\label{subsec:literature}
The investigation of the approximation guarantees of simple mechanisms has a long history in the algorithmic game theory community, dating back to \citet{HR09}. The literature has focused on various simple mechanisms such as second price auctions \citep[][]{BK96}, sequential posted pricing mechanisms \citep[e.g.,][]{Y11}, \apname~pricing \citep[e.g.,][]{AHNPY19}, and so on. 
Follow up work has extended the results to multi-unit environments \citep[e.g.,][]{DFK16,JJLZ22}, multi item auctions \citep[e.g.,][]{CHMS10,LY13,HN17,CDW21,CZ17,JL24}, non-linear utilities such as budget constraints \citep[e.g.,][]{cheng2018simple,feng2019optimal,feng2023simple},
or even beyond auction environments \citep[e.g.,][]{dutting2019simple}. 
Beyond the theoretical perspectives, these simple mechanisms often show good performance based on real-world data, as suggested by \citet{coey2021scalable,dellavigna2019uniform}.

Our paper focuses on \apname~pricing mechanisms. In single-item environments, \citet{AHNPY19} show that \apname~pricing guarantees at least a $e \approx 2.72$-approximation to the optimal. This approximation ratio is improved to $2.62$ and is shown to be tight in \citet{JLQTX19,JLTX20}.
Relatedly, \citet{bergemann2022third} show that \apname~pricing achieves a 2-approximation to the optimal if all agents share a common support for their value distributions. 
\citet{HR09,JJLZ22} show that in $k$-unit auctions, the tight approximation of \apname~pricing is of the order of $\Theta(\log k)$. 
The main contribution of our paper, compared to the literature, is to focus on the large market assumption and show that \apname~pricing achieves a constant approximation even in multi-unit auctions. 

The large market assumption is widely adopted in various settings, such as budget feasible procurement auctions \citep[e.g.,][]{AGN18,JT21}, quality of equilibria \citep[e.g.,][]{S01,CT16,FILRS16}, and (generalized) online matching \citep[e.g.,][]{MSVV07}.
Among those papers, the one that is most relevant for us is \citet{S01}, which focuses on the welfare objective and shows that simple auctions achieve a near optimal welfare guarantee under the large market assumption.

\section{Model}
\label{sec:prelim}

A principal sells $k\geq 1$ units of identical items to $n$ different agents to maximize the expected revenue. 
All agents have unit-demand, and for each agent $i$ with value $\val_i\in\reals_+$, his utility for getting an item with probability $\alloc_i\in[0,1]$ and paying a price $\pay_i\in\reals$ is 
\begin{align*}
\util_i(\alloc_i,\pay_i;\val_i) = \val_i\cdot\alloc_i - \pay_i.
\end{align*}
The private value $\val_i$ for each agent $i$ is drawn independently from the distribution $\dist_i$. 
The prior distribution $\dist=\times_{i\in[n]} \dist_i$ is publicly known. 
We also use $\dist$ to denote the auction design instance. 
The objective of the principal is to maximize the total revenue $\sum_i \pay_i$. 

\paragraph{Mechanisms}
By the revelation principle \citep{M81}, it is without loss of generality to focus on revelation mechanisms. 
A revelation mechanism $\mech=(\alloc,\pay)$ is determined by the allocation rule $\alloc: \reals^n \to [0,1]^n$ 
such that $\sum_i \alloc_i(\val) \leq k$,
and the payment rule $\pay: \reals^n \to \reals^n$.
A revelation mechanism $\mech$ must satisfy the incentive compatibility (IC) and individual rationality (IR) constraints: 
for any agent $i$ and any value $\val_i,\val'_i$,
\begin{align}
\expect[\val_{-i}\sim F_{-i}]{\val_i\cdot\alloc_i(\val_i,\val_{-i}) - \pay_i(\val_i,\val_{-i})} 
&\geq \expect[\val_{-i}\sim F_{-i}]{\val_i\cdot\alloc_i(\val'_i,\val_{-i}) - \pay_i(\val_i',\val_{-i})}, \tag{IC} \\
\expect[\val_{-i}\sim F_{-i}]{\val_i\cdot\alloc_i(\val_i,\val_{-i}) - \pay_i(\val_i,\val_{-i})} 
&\geq 0. \tag{IR}
\end{align}
Given any incentive compatible and individually rational mechanism $\mech$, we use $\mech(\dist)$ to denote the expected revenue of $\mech$ given distribution $\dist$.
We denote the optimal revenue as 
\begin{align*}
\opt(\dist) = \max_{\mech \text{ is IC, IR}} \mech(\dist).
\end{align*}

In our paper, there are two special families of mechanisms that are of particular interest, besides the revenue optimal mechanism. 

\begin{itemize}
\item \textbf{\Sppname~Pricing.}
In a \sppname~pricing mechanism, the principal sets a personalized price $\pay_i$ for each agent $i$. 
The principal offers $\pay_i$ to agent $i$ as a take-or-leave-it offer in decreasing order of $p_i$ until the supply runs out. 
Let $\spp(\pay,\dist)$ be the expected revenue from \sppname pricing given the price vector $\pay$.

\item \textbf{\Apname~Pricing.}
In \aapname~pricing mechanism, the principal sets \aapname~price $\pay$ for all agents. 
The principal offers $\pay$ to each agent $i$ as a take-or-leave-it offer in an arbitrary order until the supply runs out. 
Let $\up(\pay,\dist)$ be the expected revenue from \apname~pricing given the \apname~price $\pay$.
\end{itemize}

We use $\spp(\dist) = \max_{\pay} \spp(\pay,\dist)$ and $\up(\dist) = \max_{\pay} \up(\pay,\dist)$ to denote the optimal revenue that can be attained from the family of \sppname~pricing mechanisms and \apname~pricing mechanisms respectively. 

\paragraph{Revenue Curves} A useful technique in our paper is the analysis through revenue curves in quantile space. 
Specifically, for each agent $i$ with distribution $\dist_i$, 
the quantile for value $\val_i$ is defined as $\quant_i(\val_i) = \Pr_{z\sim\dist_i}[z\geq \val_i]$. 
Correspondingly, we can define the value corresponding to quantile $\quant_i$ as $\val_i(\quant_i) = \inf \{\val: \dist_i(\val) \geq 1-\quant_i\}$.
Based on the definition of quantiles, we can then define the revenue curve in quantile space as 
\begin{align*}
\rev_i(\quant_i) = \quant_i \cdot \val_i(\quant_i). 
\end{align*}

Given the definition of revenue curves, let $\quant^*_i$ be the monopoly quantile that maximizes the revenue, i.e., 
$\quant^*_i = \argmax_{\quant} \rev_i(\quant)$. 
When there are multiple optimizers, we break ties by choosing the smallest $\quant^*_i$.

\begin{definition}[Regularity]
\label{asp:regular}
The distributions $\dist_i$ are regular for all agents $i$. 
That is, the revenue curve $\rev_i(\quant_i)$ is concave in $\quant_i$ for all agents $i$. 
\end{definition}
The regularity assumption here is identical to the regularity assumption in \citet{M81} where $\phi_i(\val_i) = \val_i - \frac{1-\dist_i(\val_i)}{\density_i(\val_i)}$ is non-decreasing in $\val_i$. 
The representation in \cref{asp:regular} is introduced in \citet{BR89}, which can be interpreted as the marginal revenue for serving agents being weakly decreasing in the quantiles. 

\begin{definition}[Quasi-regular Distributions]
A distribution $\dist$ is quasi-regular if the conditional expected virtual value function $\virtualce(\val) = \expect{\virtual(\hat{\val})\given \hat{\val}\leq \val}$ is weakly increasing in $\val$. 
\end{definition}

\paragraph{Large Markets}
We focus on large market instances. That is, none of the individual agents can contribute to a significant fraction of the optimal revenue. 

\begin{definition}[$\epsilon$-large Markets]
For any $\epsilon\in (0,1]$, the instance $\dist$ has a $\epsilon$-large market if for any agent $i$, $\rev_i(\mquant_i) \leq \epsilon\cdot \opt(\dist)$.
\end{definition}
In this paper, we focus on the \emph{large market assumption}, where we consider an $\epsilon$-large market in the limit of $\epsilon\to 0$.

\section{Approximation of \Apname~Pricing}
\label{sec:uniform}

This section establishes our central result on the performance of \apname~pricing mechanisms relative to the revenue-optimal auction in multi-unit settings with large market assumptions. While previous work has demonstrated that \apname~pricing can suffer revenue losses on the order of $\Theta(\log k)$ in the worst case for $k$-unit auctions \citep{HR09,JJLZ22}, we show that this pessimistic bound does not hold under the large market assumption. Our analysis reveals that as markets grow large, the theoretical justification for \apname~pricing mechanisms becomes substantially stronger even when multiple units are available for sale.

We begin by presenting our main theoretical contribution, which characterizes the worst-case approximation ratio between \apname~pricing and the optimal mechanism. Throughout this section, we maintain the quasi-regularity assumption on value distributions together with the $\epsilon$-large market condition introduced in \cref{sec:prelim}, considering the limiting case as $\epsilon \rightarrow 0$.

\begin{theorem}
\label{thm:tight_approx}
For quasi-regular distributions that satisfy the large market assumption, the worst-case approximation ratio of \apname~pricing to the optimal revenue is 
\begin{align*}
\max_{\dist} \frac{\opt(\dist)}{\up(\dist)} = 2+\frac{O(1)}{\sqrt{k}}.
\end{align*}
Moreover, the worst case gap is maximized at $2.47$ when $k=1$. 
\end{theorem}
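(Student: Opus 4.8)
The plan is to prove the bound in three stages: (i) reduce the worst case, among quasi-regular large-market instances, to instances built entirely from triangular (truncated Pareto) agents with the market size sent to infinity; (ii) use the large-market Poissonization to collapse all the relevant order-statistic quantities to functionals of the distribution of the first-order statistic alone; and (iii) solve the resulting scalar variational problem in closed form and extract its dependence on $k$, including the explicit value at $k=1$ and a matching lower-bound instance.

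\textbf{Stage 1: reduction to triangular agents.} First I would replace quasi-regularity by regularity at no loss: ironing a quasi-regular revenue curve leaves $\opt(\dist)$ unchanged and does not decrease $\up(\dist)$, so an instance-wise comparison of $\opt$ and $\up$ only needs regular $\dist_i$. Given a regular agent $i$ with concave revenue curve $\rev_i$, the aim is to swap it for a family of independent triangular agents whose revenue curves are the linear pieces of a piecewise-linear concave under-approximation of $\rev_i$, taken fine enough that the group reproduces the contribution of $i$ to $\opt(\dist)$ and to $\up(\pay,\dist)$ \emph{simultaneously and for every price} $\pay$, up to an error vanishing in the refinement. Since each new agent's monopoly revenue is a small fraction of $\rev_i(\mquant_i)$, the $\epsilon$-large-market condition is inherited (and improved) under the split, so iterating over all agents and passing to the limit yields $\sup_{\dist}\opt(\dist)/\up(\dist)$ over quasi-regular large-market instances equal to the supremum over instances of arbitrarily many triangular agents, i.e.\ \cref{prop:triangle_is_worst_case}. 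Crucially, for such instances the revenue-optimal mechanism is a sequential posted-pricing mechanism, $\opt(\dist)=\spp(\dist)$, which is the structural fact the next stage exploits.

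\textbf{Stage 2: Poissonization of the order statistics.} For a triangular agent the probability of clearing any fixed price is at most its (small) monopoly probability, so at every price the number of participating agents is a sum of independent, uniformly negligible indicators. By a Poisson limit theorem this sum converges, in the large-market limit, to $\mathrm{Poisson}(\lambda(\pay))$ with $\lambda(\pay)$ determined by $\Pr[\val^{(1)}<\pay]=e^{-\lambda(\pay)}$; the only obstruction to exactness---double-counting the same agent when distinct exceedances are approximated by repeated sampling---contributes a collision error that is negligible precisely because individual probabilities are tiny. Hence (\cref{prop:order_statistics}) the joint law of all order statistics $\val^{(1)}\ge\val^{(2)}\ge\cdots$ is pinned down by the single nondecreasing function $\lambda(\cdot)$, equivalently by the CDF $G$ of the first-order statistic. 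Writing $\up(\dist)=\sup_{\pay}\pay\cdot\mathbb{E}\!\left[\min\{k,N_{\pay}\}\right]$ with $N_{\pay}\sim\mathrm{Poisson}(\lambda(\pay))$, and representing $\spp(\dist)$ through the same Poissonized order statistics (following \citet{JLTX20,JJLZ22}), both $\opt(\dist)$ and $\up(\dist)$ become explicit functionals of $G$, so the worst-case ratio reduces to
\[
\rho_k \;=\; \sup_{G}\ \frac{\opt(G)}{\up(G)},\qquad \up(G)=\sup_{\pay}\pay\cdot\mathbb{E}\!\left[\min\{k,N_{\pay}\}\right],\quad N_{\pay}\sim\mathrm{Poisson}(\lambda(\pay)).
\]

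\textbf{Stage 3: scalar optimization, asymptotics, and lower bound.} After normalizing by scale, I expect a convexity/smoothing argument to show that the extremal profile $\lambda(\cdot)$ is of equal-revenue type on an interval---the continuous analogue of the truncated Pareto worst case---so that $\rho_k$ reduces to optimizing over one or two scalar parameters (the endpoints of that interval, equivalently the value of $\mathbb{E}[N_{\pay}]$ at the binding price). Evaluating this gives the closed form; at $k=1$ it is $2.47$, and for general $k$ one expands the stationarity conditions around the regime where the binding price yields $\mathbb{E}[N_{\pay}]\approx k$, where the $\Theta(1/\sqrt{k})$ fluctuations of $\mathrm{Poisson}(k)$ about its mean produce the correction, giving $\rho_k=2+O(1)/\sqrt{k}$. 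A single explicit triangular-agent family (a truncated equal-revenue profile tuned to $k$) then certifies $\rho_k\ge 2$ for every $k\ge2$. I expect the binding difficulty to be Stage 1---engineering a triangular decomposition that preserves the optimal \emph{and} the anonymous-pricing revenue at \emph{every} price while shrinking individual monopoly revenues, two targets that pull against each other---and, secondarily, the variational step of Stage 3, proving the equal-revenue profile is exactly rather than merely asymptotically extremal, which is what pins down the constant $2.47$ at $k=1$.
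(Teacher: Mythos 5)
Your three stages map directly onto the paper's own proof: Stage~1 is \cref{prop:triangle_is_worst_case}, Stage~2 is \cref{prop:order_statistics} (including the collision-error argument), and Stage~3 is the equal-revenue characterization of \cref{lem:worst_os} followed by the asymptotic and $k=1$ computations, with the Poisson-at-its-mean mass $k^k/(k!e^k)\approx 1/\sqrt{2\pi k}$ supplying exactly the $O(1/\sqrt{k})$ correction you predicted. So the approach is essentially the paper's.

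Two points are worth flagging. First, your opening move in Stage~1 is directionally wrong: ironing leaves $\opt$ unchanged but weakly \emph{increases} $\up$, so the ironed (regular) instance has a weakly \emph{smaller} ratio $\opt/\up$, and this does not reduce the quasi-regular worst case to the regular one for upper-bound purposes. The paper instead keeps the quasi-regular instance, first flattens it to a constant-negative-virtual-value distribution (\cref{lem:const_negative_virtual}, where quasi-regularity gives stochastic dominance in the direction that \emph{decreases} $\up$), and then uses quasi-regularity again to show the triangular decomposition changes the acceptance probability at every price $p\ge 1/k$ only by a $1\pm O(\epsilon\gamma k\log k)$ factor (\cref{lem:uniform_pricing_decreasing}); the error vanishes in the large-market limit $\epsilon\to 0$, not in the refinement of the discretization. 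Second, the difficulty you correctly anticipated in Stage~3 --- proving the equal-revenue profile exactly extremal and then evaluating $\spp$ on it --- is sidestepped in the paper by bounding $\opt(F^*)\le\EAR(F^*)=R(Q^{-1}(k))$, which for the equal-revenue profile admits the closed form $\le 2k\beta$ with $\beta=\frac{1/k}{1-k^k/(k!e^k)}$; only the $k=1$ case is computed exactly (via $\Phi_1=e^{-Q}$, giving $\approx 2.4762$), and $k=2,3,4$ are handled numerically before the asymptotic bound takes over at $k\ge 5$. Your plan to represent $\spp$ exactly through the order statistics for general $k$ would be substantially harder than the $\EAR$ route and is not what the paper does.
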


This theorem establishes that \apname~pricing mechanisms achieve a constant-factor approximation to the optimal revenue in large markets, with an approximation ratio of $2 + O(1/\sqrt{k})$. Most notably, this bound approaches 2 asymptotically as the number of units grows large, while being maximized at approximately 2.47 when selling a single unit.

This result represents a fundamental departure from worst-case analyses without the large market assumption. In standard multi-unit auction settings, the revenue gap between \apname~pricing and the optimal mechanism grows as $\Theta(\log k)$ with the number of units, suggesting increasingly poor performance of simple pricing mechanisms as markets scale. To formalize this contrast, we recall the following result from the literature:

\begin{proposition}[\citealp{JJLZ22}]
\label{prop:loose_gap}
For any regular distribution over values, the worst case gap between the optimal revenue and uniform pricing is $\Theta(\log k)$. 
\end{proposition}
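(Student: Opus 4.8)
The plan is to prove the two directions separately. For the $\Omega(\log k)$ lower bound I would take (a continuous version of) the instance in the Example of the introduction: $n=k$ agents with agent $i$ valued at $1/i$, implemented either as a point mass (whose revenue curve $\quant\mapsto\quant/i$ is linear, hence regular) or, to stay atomless, as a narrow triangular distribution around $1/i$. Then $\opt(\dist)=\sum_{i\in[k]}1/i=\Theta(\log k)$: posting $1/i$ to agent $i$ already collects $\sum_i 1/i$, matching the trivial upper bound $\max\{\sum_i\rev_i(\quant_i):\quant_i\in[0,1],\ \sum_i\quant_i\le k\}$. On the other hand an anonymous price $\pay$ sells exactly $\min\big(k,\setsize{\{i:1/i\ge\pay\}}\big)$ units, so $\up(\pay,\dist)=\pay\cdot\min(k,\floor{1/\pay})\le 1$ for every $\pay$, whence $\up(\dist)\le 1$ and $\opt(\dist)/\up(\dist)=\Theta(\log k)$.

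For the $O(\log k)$ upper bound the starting point is the ex ante relaxation $\opt(\dist)\le\EAR(\dist):=\max\{\sum_i\rev_i(\quant_i):\quant_i\in[0,1],\ \sum_i\quant_i\le k\}$, valid under regularity. Let $(\quant^\diamond_i)_i$ attain it, set $\pay_i:=\val_i(\quant^\diamond_i)$ so $\rev_i(\quant^\diamond_i)=\quant^\diamond_i\pay_i$, relabel so $\pay_1\ge\pay_2\ge\cdots$, put $Q_j:=\sum_{i\le j}\quant^\diamond_i\in(0,k]$, and regard $\EAR(\dist)=\int_0^{Q_n}\pay(Q)\dd Q$ where $\pay(\cdot)$ is the non-increasing step function equal to $\pay_i$ on $(Q_{i-1},Q_i]$. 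I would then bound $\up(\dist)$ from below in two ways: (i) posting $\pay_i$ gives $\up(\dist)\ge\pay_i\cdot\Pr[\val_i\ge\pay_i]\ge\quant^\diamond_i\pay_i=\rev_i(\quant^\diamond_i)$; (ii) for any $t\in(0,k]$, posting the price $\pay(t)=\pay_j$ attracts in expectation at least $Q_j\ge t$ buyers, and since this count is a sum of independent indicators it concentrates, so $\Omega(t)$ units are sold and $\up(\dist)=\Omega\big(t\cdot\pay(t)\big)$, i.e.\ $\pay(t)=O(\up(\dist)/t)$. Splitting $(0,Q_n]$ into $(0,\min(1,Q_n)]$ together with the at most $\lceil\log_2 k\rceil$ dyadic blocks $(2^{\ell-1},2^{\ell}]$, bound (ii) gives $\int_{2^{\ell-1}}^{2^{\ell}}\pay\le 2^{\ell-1}\pay(2^{\ell-1})=O(\up(\dist))$ per block, so summing yields $\opt(\dist)\le\EAR(\dist)=O(\log k)\cdot\up(\dist)$ once the initial block is controlled.

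The hard part is precisely that initial block $(0,\min(1,Q_n)]$, where the highest-price agents carry total ex ante probability at most one unit. Here bound (ii) degrades to $\pay(t)=O(\up(\dist)/t)$, which is not integrable near $0$, and indeed $\int_0^1\pay(Q)\dd Q$ cannot be bounded by $O(\log k)\cdot\up(\dist)$ using concentration alone — in this regime the supply constraint is slack and the problem is genuinely single-item, where the gap between $\opt$ and anonymous pricing is only a constant. So I would argue that this block is at most the single-item ex ante optimum $\max\{\sum_i\rev_i(\quant_i):\sum_i\quant_i\le 1\}$ of the sub-instance consisting of the relevant top-price agents, and invoke the single-item analysis of \citet{AHNPY19} (see also \citet{JLTX20}): anonymous pricing obtains a constant fraction of that quantity, hence of $\up(\dist)$. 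A secondary, routine point is the concentration claim behind (ii), namely $\expect{\min(k,N)}=\Omega(\min(k,\expect{N}))$ for a sum $N$ of independent indicators, which follows from a Chernoff bound on the lower tail together with the estimate $\Pr[N\ge 1]\ge 1-e^{-\expect{N}}$ in the case $\expect{N}\le 1$.
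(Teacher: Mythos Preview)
The paper does not prove \cref{prop:loose_gap}; it is quoted as a known result from \citet{JJLZ22} and used as a black box. The only related argument in the paper is the appendix proof of the quasi-regular extension \cref{cor:loose_gap}, which simply reduces to the regular case via the triangular worst case and the constant-factor $\opt$-vs-$\spp$ bound of \citet{Y11}. So there is no in-paper proof to compare against.

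That said, your independent argument is sound. The lower bound via deterministic values $1/i$ is precisely the paper's introductory example, and point masses have linear (hence concave) revenue curves. For the upper bound, the ex-ante/dyadic decomposition works: the bound $\up(\dist)=\Omega\big(t\cdot\pay(t)\big)$ for $t\in[1,k]$ follows from lower-tail Chernoff on the sum of acceptance indicators, and the initial block $(0,1]$ is indeed dominated by the single-item ex-ante optimum of the same instance (regularity handles the one fractionally covered agent via concavity of its $\rev_i$), to which the constant-factor $\up$-vs-$\EAR$ bound of \citet{AHNPY19} applies because $k$-unit anonymous pricing at any fixed price weakly dominates $1$-unit anonymous pricing. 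The two loose ends you flag---the initial block and the concentration step---are real but are settled by exactly the tools you name; bound~(i) is not actually needed once you invoke the single-item result for the initial block.
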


We extend this result to the broader class of quasi-regular distributions:

\begin{corollary}
\label{cor:loose_gap}
For any quasi-regular distribution over values, the worst case gap between the optimal revenue and uniform pricing is $\Theta(\log k)$. 
\end{corollary}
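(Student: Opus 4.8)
The plan is to prove the two inequalities separately, deriving the upper bound from \cref{prop:loose_gap} by a regularization step.

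\emph{Lower bound.} First I would record that every regular distribution is quasi-regular, so the extremal instance behind \cref{prop:loose_gap} is itself admissible here. Working in quantile space and using the Myerson identity $\expect{\virtual(\hat\val)\cdot\indicate{\hat\val\le\val}}=\rev(1)-\rev(\quant(\val))$, one obtains $\virtualce(\val)=\frac{\rev(1)-\rev(\quant(\val))}{1-\quant(\val)}$. Since $\quant(\cdot)$ is non-increasing, quasi-regularity is exactly the statement that the slope $\frac{\rev(1)-\rev(\quant)}{1-\quant}$ of the chord of $\rev$ anchored at the right endpoint $(1,\rev(1))$ is non-increasing in the position $\quant$ of its left endpoint, and this monotonicity follows from concavity of $\rev$ by an elementary secant inequality. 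Hence $\regulardistspace\subseteq\quasiregulardistspace$, and the $\Omega(\log k)$ bound is inherited.

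\emph{Upper bound.} It suffices to show $\opt(\dist)\le O(\log k)\cdot\up(\dist)$ for every quasi-regular instance $\dist$. I would reduce to the regular case by ironing: let $\bar\dist$ be the instance in which each $\dist_i$ is replaced by the regular distribution whose revenue curve is the concave hull $\iron_i$ of $\rev_i$. The optimal $k$-unit revenue depends on the instance only through the ironed revenue curves (the optimal mechanism serves the $k$ agents of largest positive ironed virtual value, which are the same for $\dist$ and $\bar\dist$), so $\opt(\dist)=\opt(\bar\dist)$. Applying \cref{prop:loose_gap} to the regular instance $\bar\dist$ gives $\opt(\bar\dist)\le O(\log k)\cdot\up(\bar\dist)$, so it remains only to prove the comparison $\up(\bar\dist)\le O(1)\cdot\up(\dist)$, which then yields the claim.

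\emph{Main obstacle.} The step I expect to be hardest is precisely $\up(\bar\dist)\le O(1)\cdot\up(\dist)$: ironing raises the value at every quantile, hence weakly increases the count of buyers above any fixed price and thus weakly increases anonymous-pricing revenue, so one must bound this increase by a universal constant (losing another $\Theta(\log k)$ here would be useless). Quasi-regularity is what makes this possible: by the chord characterization above, inside any ironing interval $[a_i,b_i]$ of $\rev_i$ (in quantile space) $\rev_i$ must lie above the segment from $(a_i,\rev_i(a_i))$ to $(1,\rev_i(1))$, which controls how far $\rev_i$ drops below $\iron_i$ and hence how much the operative quantile can grow under ironing at a given price. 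Concretely I would fix the optimal anonymous price $p^\star$ for $\bar\dist$, split agents into those whose quantile at $p^\star$ lies on a flat (ironed) piece of $\iron_i$ and those for whom it does not, observe that the latter contribute identically to $\up(\dist)$ at price $p^\star$, show the former lose only a constant factor by the preceding bound, and aggregate via $\up(p,\dist)=p\cdot\expect{\min(k,\#\{i:\val_i\ge p\})}$. As a consistency check, the quasi-regular revenue curves that saturate the chord condition are exactly the piecewise truncated-Pareto (``triangular'') curves, which are already regular, so the reduction is vacuous on precisely the family where \cref{prop:loose_gap} is tight. An alternative route is to reprove \cref{prop:loose_gap} directly and weaken each appeal to concavity of the revenue curve to quasi-regularity; this relocates the same difficulty inside that argument rather than isolating it.
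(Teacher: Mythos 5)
Your lower bound coincides with the paper's (regular distributions are quasi-regular), and your chord characterization $\virtualce(\val)=\frac{\rev(1)-\rev(\quant(\val))}{1-\quant(\val)}$ is correct; it is exactly the structural property the paper exploits in \cref{lem:const_negative_virtual} and \cref{lem:uniform_pricing_decreasing}. For the upper bound you take a genuinely different route. The paper does not iron: it asserts that the worst case of the ratio $\spp/\up$ over quasi-regular instances is attained at triangular (hence regular) instances, applies \cref{prop:loose_gap} there, and then converts $\spp$ into $\opt$ via the correlation-gap bound $\opt\le(1+\Theta(1)/\sqrt{k})\cdot\spp$ of \citet{Y11}. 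You instead use $\opt(\dist)=\opt(\bar{\dist})$ for the ironed instance (correct and standard), apply \cref{prop:loose_gap} to $\bar{\dist}$ directly, and reduce everything to the single comparison $\up(\bar{\dist})\le O(1)\cdot\up(\dist)$. What the paper's route buys is that the $\opt$-versus-$\spp$ step comes for free from the correlation gap; what your route buys is that the entire difficulty is isolated in one cleanly stated inequality. It is worth noting that both arguments bottom out in the same kind of claim—an $O(1)$ comparison of anonymous-pricing revenue between a quasi-regular instance and a regularization of it—and the paper's own proof discharges this with a one-line assertion, so your proposal is not less rigorous than the target.

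That said, your sketch of the crux step has two concrete holes. First, the chord property gives $\rev_i(\quant)\ge$ the segment from $(a_i,\rev_i(a_i))$ to $(1,\rev_i(1))$ for $\quant\ge a_i$, which bounds the hull gap only by a factor of order $\frac{1}{1-\quant}$; the paper's analogous \cref{lem:uniform_pricing_decreasing} obtains its constant precisely by invoking the large-market bound $\quant^{(z)}_i\le\epsilon\gamma k\log k$, which is unavailable in \cref{cor:loose_gap}, so you must separately handle agents whose operative quantile at $p^\star$ is not small (the factor $\frac{2-\quant}{1-\quant}$ is an absolute constant for $\quant\le 1/2$, but the regime $\quant\to 1$ needs its own argument). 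Second, for prices above an agent's monopoly value—i.e., on the increasing portion of $\rev_i$—ironing inflates \emph{values} rather than quantiles, so the natural conclusion is a price-shifted bound $\bar{\quant}_i(p)\le \quant_i(p/C)$ rather than a same-price quantile bound; your aggregation should accordingly read $\up(p^\star,\bar{\dist})\le C\cdot\up(p^\star/C,\dist)\le C\cdot\up(\dist)$ rather than a comparison at the fixed price $p^\star$. (A minor aside: the quasi-regular curves saturating the chord condition are the linear ones through the origin, i.e., point masses, not the triangular ones; nothing in your argument rests on this.)
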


Without the large market assumption, this logarithmic gap grows unboundedly with supply, suggesting that \apname~pricing performs poorly in large-scale markets. It is precisely this pessimistic conclusion that our \cref{thm:tight_approx} overturns when we restrict attention to large markets.

For our analysis, we adopt the normalization that the optimal revenue from \apname~pricing equals one. Under this normalization and given \cref{cor:loose_gap}, the optimal revenue cannot exceed $\log k$. This normalization facilitates a key property of agent distributions in large markets that will prove instrumental in our analysis. 
Let $\gamma$ be the constant in \cref{cor:loose_gap} such that the worst case gap between the optimal revenue and uniform pricing is at most $\gamma \cdot\log k$. 

\begin{corollary}\label{cor:large_cdf}
For any $\pay\geq \frac{1}{k}$, in $\epsilon$-large markets with quasi-regular distributions, we have $\dist_i(\pay) \geq 1 - \epsilon \cdot\gamma k\log k$.
\end{corollary}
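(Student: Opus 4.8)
The plan is to read the bound directly off the $\epsilon$-large market condition together with \cref{cor:loose_gap} and the working normalization $\up(\dist)=1$. The only idea needed is that \emph{offering the single price $\pay$ to agent $i$ in isolation} is a feasible single-agent mechanism, so its expected revenue is dominated by agent $i$'s monopoly revenue $\rev_i(\mquant_i)$, which the large-market assumption forces to be $O(\epsilon\log k)$.

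Concretely I would proceed in three short steps. First, lower bound the revenue of posting price $\pay$ to agent $i$: breaking ties in the seller's favor, this revenue equals $\pay\cdot\Pr_{\val\sim\dist_i}[\val\ge\pay]\ge\pay\cdot\Pr_{\val\sim\dist_i}[\val>\pay]=\pay\cdot(1-\dist_i(\pay))$, while by Myerson's characterization the optimal single-agent revenue is exactly $\rev_i(\mquant_i)$; hence $\pay\cdot(1-\dist_i(\pay))\le\rev_i(\mquant_i)$. Second, chain this with the $\epsilon$-large market definition and the crude bound $\opt(\dist)\le\gamma\log k$ supplied by \cref{cor:loose_gap} under $\up(\dist)=1$:
\begin{align*}
\pay\cdot(1-\dist_i(\pay))\ \le\ \rev_i(\mquant_i)\ \le\ \epsilon\cdot\opt(\dist)\ \le\ \epsilon\,\gamma\log k.
\end{align*}
Third, for $\pay\ge\frac1k$ we have $\frac1\pay\le k$, so dividing through gives $1-\dist_i(\pay)\le\epsilon\,\gamma\log k/\pay\le\epsilon\,\gamma k\log k$, that is, $\dist_i(\pay)\ge1-\epsilon\,\gamma k\log k$, as claimed.

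There is essentially no obstacle here; the statement is a three-line chain of inequalities. The only points that deserve a moment's care are (i) the direction of the survival-function inequality, $1-\dist_i(\pay)=\Pr_{\val\sim\dist_i}[\val>\pay]\le\Pr_{\val\sim\dist_i}[\val\ge\pay]$, so that the posted-price revenue at $\pay$ is \emph{at least} $\pay(1-\dist_i(\pay))$ rather than at most; and (ii) recalling that $\rev_i(\mquant_i)$ is, by construction, the largest revenue extractable from agent $i$ by any single posted price, so that a one-price deviation cannot beat it. Everything else is bookkeeping about inserting the normalization $\up(\dist)=1$ and the logarithmic bound of \cref{cor:loose_gap}.
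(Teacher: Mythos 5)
Your proof is correct and follows essentially the same route as the paper's: bound the single-price revenue $\pay\cdot(1-\dist_i(\pay))$ by the monopoly revenue $\rev_i(\mquant_i)$, invoke the $\epsilon$-large market bound together with $\opt(\dist)\le\gamma\log k$ from \cref{cor:loose_gap} under the normalization $\up(\dist)=1$, and divide by $\pay\ge\frac{1}{k}$. The paper merely phrases this as a proof by contradiction, whereas you write the direct chain of inequalities; the content is identical.
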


This corollary confirms that in large markets, agents' values are sufficiently dispersed such that the probability of any particular agent's value exceeding a reasonable price threshold is uniformly low. This property reflects the fundamental economic feature of large markets that no individual agent's preferences dominate market outcomes.

We now proceed to develop the technical machinery needed to establish our main result. The proof of \cref{thm:tight_approx} follows from a synthesis of two key components that we develop in subsequent subsections. First, in \cref{subsec:reduction}, we show that worst-case instances can be reduced to environments where agents' value distributions take a specific triangular form (\cref{prop:triangle_is_worst_case}). This reduction preserves the essential features of the revenue optimization problem while significantly simplifying our analysis. Second, in \cref{subsec:order_statistics}, we develop novel techniques for characterizing order statistics in large markets, demonstrating that higher-order statistics can be precisely approximated by functions of the first-order statistic alone (\cref{prop:order_statistics}). Together, in \cref{subsec:bound_calculation,subsec:universal_bound}, these technical foundations enable us to derive tight bounds on the revenue gap between \apname~pricing and the optimal mechanism, ultimately establishing the approximation guarantee stated in \cref{thm:tight_approx}.

\subsection{Reduction to Triangular Instances}
In this section, we provide the formal reduction, showing that in large markets, the worst case gap occurs at triangular distributions. 
\label{subsec:reduction}
\begin{definition}[Triangular Distributions]
For any $\mquant\in[0,1]$ and any $\mrev > 0$,
a triangular distribution $\tri_{\mrev,\mquant}$ has a cumulative distribution function\footnote{In the case with $\mquant=0$, by slightly abusing the notations, we view $\frac{\mrev}{\mquant}$ as $+\infty$ and $\val\geq \frac{\mrev}{\mquant}$ never holds.} 
\begin{align*}
\tri_{\mrev,\mquant}(\val) = \begin{cases}
1 & \val\geq \frac{\mrev}{\mquant};\\
1-\frac{1}{1+\frac{\val}{\mrev}(1-\mquant)} & \val \in [0,\frac{\mrev}{\mquant}).
\end{cases}
\end{align*}
\end{definition}

Triangular distributions are special cases of regular distributions, where there is a point mass at the highest value, and the virtual value for all lower values is a negative constant.
\citet{M81} shows that the revenue optimal mechanism only sells the items to agents with the highest non-negative virtual values. 
This corresponds to posting a price at the highest value for each agent. 
\begin{lemma}\label{lem:spp_for_triangle}
For triangular distributions, the revenue-optimal mechanism is \sppname\ pricing. 
\end{lemma}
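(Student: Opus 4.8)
The plan is to read the virtual value of a triangular distribution directly off its revenue curve, apply \citet{M81}'s virtual-surplus characterization of the revenue-optimal auction, and then recognize the resulting allocation and payment rule as a sequential posted-pricing mechanism. Concretely, I would first invert the CDF of $\tri_{\mrev,\mquant}$ to get $\val(\quant)=\frac{\mrev(1-\quant)}{\quant(1-\mquant)}$ for $\quant\in(\mquant,1]$, while the top value $\mrev/\mquant$ carries an atom of mass $\mquant$, so $\val(\quant)=\mrev/\mquant$ for $\quant\in[0,\mquant]$. Hence $\rev(\quant)=\quant\,\val(\quant)$ equals $\frac{\mrev}{\mquant}\cdot\quant$ on $[0,\mquant]$ and $\frac{\mrev}{1-\mquant}(1-\quant)$ on $[\mquant,1]$ --- a tent peaking at $(\mquant,\mrev)$, which is in particular concave, so no ironing is needed. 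Reading the marginal revenue $\rev'(\quant)$ back in value space, the virtual value $\virtual_i$ equals the positive constant $\mrev_i/\mquant_i$ exactly on the atom $\val_i=\mrev_i/\mquant_i$ and equals a negative constant $-\mrev_i/(1-\mquant_i)$ for every $\val_i<\mrev_i/\mquant_i$. This merely makes rigorous the sentence preceding the lemma.

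Next I would invoke Myerson: every IC, IR mechanism has revenue $\expect{\sum_i\virtual_i(\val_i)\,\alloc_i(\val)}$, so the optimal feasible monotone allocation pointwise puts the $k$ units on the agents with the largest nonnegative virtual values. Since each $\virtual_i$ is negative off the atom, an optimal mechanism never serves agent $i$ unless $\val_i=\mrev_i/\mquant_i$; among agents at their atom it serves the (at most) $k$ with the largest $\mrev_i/\mquant_i$, breaking ties arbitrarily. This allocation is monotone in each $\val_i$ and pointwise feasible ($\sum_i\alloc_i\le k$), hence implementable, and the Myerson payment identity $\pay_i(\val_i)=\val_i\alloc_i(\val_i)-\int_0^{\val_i}\alloc_i(t)\dd t$ forces a served agent to pay exactly their threshold $\mrev_i/\mquant_i$.

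Finally I would match this to the definition of sequential posted pricing by setting the personalized price $\pay_i=\mrev_i/\mquant_i$. Agent $i$ accepts iff $\val_i\ge\pay_i$, i.e.\ iff $\val_i=\mrev_i/\mquant_i$, which occurs with probability $\mquant_i$ independently across agents; approaching agents in decreasing order of $\pay_i$ until the supply is exhausted then serves precisely the (at most) $k$ accepting agents with the highest $\pay_i$ --- equivalently, the highest virtual value --- each paying $\pay_i$. This is exactly the optimal mechanism just described, so $\spp(\pay,\dist)=\opt(\dist)$ and the revenue-optimal mechanism is sequential posted pricing.

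The points requiring care are all mild: (i) the top value $\mrev/\mquant$ genuinely carries an atom of mass $\mquant$, so ``posting a price at the highest value'' is not degenerate; (ii) each agent's interim allocation is a single upward step at $\val_i=\mrev_i/\mquant_i$, so the payment formula collapses to $\mrev_i/\mquant_i$ times the acceptance probability, matching a posted price rather than needing a density; and (iii) ties among agents sharing the same $\mrev_i/\mquant_i$ can be broken by any fixed order, consistent with the ``decreasing order of $\pay_i$'' convention in the definition of sequential posted pricing. The only real obstacle is thus bookkeeping around the atoms --- checking that the Myerson virtual-surplus and payment identities, usually stated for smooth densities, go through in their integral form here --- which is routine.
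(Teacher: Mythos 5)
Your proof is correct and follows essentially the same route as the paper, which justifies the lemma via the remark immediately preceding it: the triangular distribution has an atom at the top value, a negative constant virtual value below it, and Myerson's characterization then reduces the optimal mechanism to posting each agent's top value $\mrev_i/\mquant_i$ as a personalized price. Your version simply makes the revenue-curve computation, the payment identity at the atom, and the tie-breaking explicit, all of which are routine.
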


In the next proposition, we show that given any quasi-regular distributions that satisfy the large market assumption, 
we can construct triangular distributions that also satisfy the large market assumption while almost preserving the approximation guarantee. 

\begin{proposition}\label{prop:triangle_is_worst_case}
For any $k \geq 1, \epsilon \in (0, \frac{1}{\gamma k\log k})$ and any instance $\dist$ with quasi-regular distributions that satisfies the $\epsilon$-large market assumption, there exists another instance $\hat{\dist}$ with triangular distributions that also satisfies the $\epsilon$-large market assumption, and the approximation ratio of \apname~pricing decreases by at most 
$\frac{1+\epsilon\cdot \gamma k\log k}{(1-\epsilon\cdot \gamma k\log k)^2}$. 
\end{proposition}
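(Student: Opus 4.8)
The plan is to construct $\hat\dist$ by replacing, for each agent $i$, the distribution $\dist_i$ with a family $T_i$ of triangular distributions, chosen so that $T_i$ reproduces the survival function of $\dist_i$ up to a multiplicative $1+O(\delta)$ factor (with $\delta := \epsilon\,\gamma k\log k$) while collectively carrying the revenue that the optimal mechanism extracts from agent $i$, and then to bound how far $\opt$ and $\up$ move. Write $\iron_i$ for the concave closure of $\rev_i$; quasi-regularity (monotonicity of $\virtualce$) ensures that the behaviour of $\dist_i$ relevant to both $\opt$ and $\up$ is governed, up to a $1+O(\delta)$ distortion, by $\iron_i$, which is concave with $\iron_i(0)=0$ and hence a superposition of triangular (tent-shaped) revenue curves together with a deterministic component --- the standard triangular decomposition \citep[cf.][]{AHNPY19}. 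Take one triangular agent per atom of a sufficiently fine such decomposition, refined until every piece has acceptance probability at most $\delta$ at every price $p\geq\tfrac1k$; this is possible by \cref{cor:large_cdf}, which together with the normalization $\up(\dist)=1$ also lets us disregard prices below $\tfrac1k$ entirely (they yield anonymous revenue below $1$). Each piece of $T_i$ has monopoly revenue at most $\iron_i(\mquant_i)=\rev_i(\mquant_i)\le\epsilon\,\opt(\dist)$, and by subdividing a tent into smaller tents of the same total shape we may make it as small as needed, so $\hat\dist$ stays $\epsilon$-large; and by \cref{lem:spp_for_triangle}, $\opt(\hat\dist)=\spp(\hat\dist)$.

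For the anonymous-pricing side --- the easy direction --- fix any price $p$. The number of buyers in $\hat\dist$ willing to pay $p$ is obtained from the corresponding count in $\dist$ by splitting each Bernoulli random variable into several independent Bernoullis with (almost) the same total mean. Since $x\mapsto\min(k,x)$ is concave with $\min(k,0)=0$, this splitting can only weakly decrease $\expect{\min(k,N_p)}$, the $1+O(\delta)$ slack coming only from the prices $p\ge\tfrac1k$ where an imperfect survival match near the bottom of supports is possible --- and there every acceptance probability is at most $\delta$ by \cref{cor:large_cdf}. Multiplying by $p$ and maximizing over $p$ gives $\up(\hat\dist)\le(1+\delta)\,\up(\dist)$.

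The optimal-revenue side is the crux. Let $\mech^*$ be the optimal mechanism for $\dist$ and, for each $i$, let $\tilde x_i:[0,1]\to[0,1]$ be its interim allocation rule in quantile space; using the layered decomposition $\tilde x_i(q)=\int_0^1\indicate{q\le\tau_i(a)}\dd a$, the revenue $\mech^*$ extracts from agent $i$ equals $\int_0^1\iron_i(\tau_i(a))\dd a$, a superposition of points on agent $i$'s revenue curve, each equal to the monopoly revenue of a triangular piece of $T_i$. I then argue that the optimal posted-price mechanism for the all-triangular instance (optimal by \cref{lem:spp_for_triangle}) recovers each such contribution up to a $1-O(\delta)$ factor: since the independent family $\{\hat\dist_t\}_{t\in T_i}$ reproduces the survival function of $\dist_i$ and every piece is small, the event that $\mech^*$ serves agent $i$ from a given layer couples --- losing only the double-counting (``collision'') probability, bounded by $\delta$ because each piece accepts with probability at most $\delta$ --- to the event that the corresponding triangular agent realizes its top value while supply remains; a further $1-O(\delta)$ factor accounts for aggregating collisions across pieces and for transferring the global supply constraint. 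This yields $\opt(\hat\dist)=\spp(\hat\dist)\ge(1-\delta)^2\,\opt(\dist)$, and combining with the previous paragraph gives $\opt(\hat\dist)/\up(\hat\dist)\ge\frac{(1-\delta)^2}{1+\delta}\cdot\opt(\dist)/\up(\dist)$, which is the claimed bound. I expect the main obstacle to be exactly this coupling: transferring $\mech^*$'s allocation --- which satisfies a single joint supply constraint across all agents and all layers --- onto the independent triangular replacements while keeping the loss multiplicative in $\delta$, and doing so with the same decomposition that is used (survival-preserving form) for the $\up$ bound.
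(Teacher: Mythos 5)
Your overall route is the same as the paper's: decompose each (ironed) revenue curve into many small triangular pieces, one new agent per piece, and bound the movement of $\up$ and $\opt$ separately by $1+O(\delta)$ and $1-O(\delta)$ with $\delta=\epsilon\gamma k\log k$. On the \apname-pricing side your argument is a legitimate variant: the paper compares acceptance probabilities $A_i(p)$ versus $\hat A_i(p)$ directly and then couples the remaining-supply events, whereas you invoke subadditivity of $t\mapsto \min(k,x+t)-\min(k,x)$ to show that splitting a Bernoulli into independent Bernoullis of the same total mean can only decrease $\expect{\min(k,N_p)}$. That works, but note that the entire quantitative content of the lemma still sits in the clause you wave at — ``(almost) the same total mean'': you must show the split agents' total acceptance probability at any $p\ge 1/k$ exceeds the original agent's by at most a $\frac{1+\delta}{1-\delta}$-type factor, and the lower bound on the original agent's acceptance probability is exactly where quasi-regularity enters (the CDF must dominate the triangular chord through $(\mquant_i,\mrev_i)$). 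This is an explicit two-line computation in the paper, not a consequence of ``$\iron_i$ governs everything up to $1+O(\delta)$.''

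The genuine gap is the $\opt$ direction, and you correctly identify it yourself: your layered coupling of $\mech^*$'s allocation onto the independent triangular agents, with a joint supply constraint across all agents and layers, is not carried out, and it is not clear it can be made to lose only $O(\delta)$ as stated. The paper avoids the coupling entirely by two devices you are missing. First, it works in virtual-surplus space: $\opt(\dist)$ equals the expected ironed virtual surplus, and on $\hat\dist$ one exhibits a feasible mechanism in which at most one of agent $i$'s $Z_i$ pieces can win; piece $z$ realizes its top value $\frac{r^{(z)}_i-r^{(z-1)}_i}{\quant^{(z)}_i-\quant^{(z-1)}_i}$ with probability $\quant^{(z)}_i-\quant^{(z-1)}_i$, matching the probability that agent $i$'s ironed virtual value lands in interval $z$, and the only loss is the probability $\le \quant^{(z-1)}_i\le\delta$ that a higher piece of the same agent also fires. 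Second — and this is the step that dissolves your ``joint supply constraint'' obstacle — the paper first pads the instance with $N\to\infty$ dummy agents distributed as $\tri_{\epsilon/k,\epsilon}$ (top value $1/k$). These add nothing to \apname~pricing but guarantee that the optimal mechanism always sells all $k$ units at price at least $1/k$, so one only needs to account for virtual values above $1/k$ and the interim allocation probabilities $\alloc_i(z)$ of the original optimal mechanism can be reused verbatim. You also silently drop the paper's preliminary reduction (\cref{lem:const_negative_virtual}) flattening the below-monopoly part of each distribution, which is what licenses ignoring that region when matching survival functions. Without the dummy-agent and virtual-surplus steps your proof of $\opt(\hat\dist)\ge(1-O(\delta))\opt(\dist)$ is incomplete.
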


In the limiting case where $\epsilon\to 0$, triangular distributions are the exact worst case distributions that maximize the approximation ratio of \apname~pricing.

To prove \cref{prop:triangle_is_worst_case}, we first reduce the worst case distribution to those where the negative virtual value is a constant. 
Specifically, we say that a distribution $\dist$ is a constant negative virtual value distribution if its virtual value function satisfies the following condition: there exists a constant $z < 0$ such that whenever $\virtual(\val) \leq 0$, it holds that $\virtual(\val) = z$.

\begin{lemma}\label{lem:const_negative_virtual}
For any instance $\dist$ with quasi-regular distributions that satisfies the large market assumption, there exists another instance $\hat{\dist}$ with constant negative virtual value distributions that also satisfies the large market assumption and a weakly larger approximation ratio. 
\end{lemma}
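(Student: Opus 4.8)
The plan is to work entirely in quantile space with the revenue curves $\rev_i$. I would first record the revenue‑curve form of quasi‑regularity: a standard integration by parts gives $\expect{\virtual_i(\hat\val)\given\hat\val\le\val}=-\rev_i(\quant_i(\val))/(1-\quant_i(\val))$, so $\dist_i$ is quasi‑regular exactly when $\quant\mapsto\rev_i(\quant)/(1-\quant)$ is non‑decreasing on $[0,1)$. From this I extract two facts I will use: (i) for $\quant\ge\mquant_i$ the curve $\rev_i$ lies weakly above the chord from $(\mquant_i,\mrev_i)$ to $(1,0)$, and for $\quant\le\mquant_i$ it lies weakly above the chord from $(0,0)$ to $(\mquant_i,\mrev_i)$ (the latter needing only that $\val_i(\cdot)$ is non‑increasing); and (ii) since the quantile of a draw is uniform on $[0,1]$ and $\iron_i'$ is the ironed virtual value, $\opt(\dist)$ and the ex‑ante relaxation $\EAR(\dist)$ are determined by the ironed revenue curves $\{\iron_i\}$ alone, and a $k$‑unit Myerson‑optimal mechanism never allocates at a quantile above $\mquant_i$, where $\iron_i'\le 0$.

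The construction proceeds agent by agent. \emph{Step A:} replace $\rev_i$ on $[\mquant_i,1]$ by the straight segment down to $(1,0)$. By (i) this only lowers the curve, and only where the ironed virtual value is already non‑positive; hence $\mrev_i$ and $\iron_i$ on $[0,\mquant_i]$ are unchanged, so $\opt$ is unchanged, while $\dist_i$ becomes first‑order stochastically smaller, so every count $N_\pay=\setsize{\{i:\val_i\ge\pay\}}$ becomes stochastically smaller and $\up(\dist)=\max_\pay\pay\cdot\expect{\min(k,N_\pay)}$ weakly decreases. For \emph{regular} distributions $\rev_i$ is already concave and increasing on $[0,\mquant_i]$, so after Step A the virtual value equals the negative constant $-\mrev_i/(1-\mquant_i)$ wherever it is non‑positive and we are done. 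It remains to treat the sub‑monopoly behavior of a genuinely quasi‑regular distribution, where $\rev_i$ can dip strictly below $\iron_i$ on $[0,\mquant_i]$.

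\emph{Step B} is the crux and the main obstacle. Ironing $\rev_i$ up to $\iron_i$ on $[0,\mquant_i]$ would make $\dist_i$ first‑order stochastically \emph{larger} and could raise $\up$, while any pointwise‑lower replacement restoring monotonicity strictly lowers $\iron_i$ at its contact points and hence lowers $\opt$; either way the ratio can move the wrong way. The resolution is to \emph{split} agent $i$ rather than distort it: writing $\iron_i$ on $[0,\mquant_i]$ as piecewise linear with slopes $s_1>s_2>\cdots$ on quantile‑lengths $\ell_1,\ell_2,\dots$ (so $\sum_j\ell_j=\mquant_i$), replace agent $i$ by the triangular sub‑agents $\tri_{s_j\ell_j,\,\ell_j}$, each trivially a constant‑negative‑virtual‑value distribution. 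I would then verify: (a) at every Lagrange multiplier the aggregate ex‑ante response of these sub‑agents equals that of $\iron_i$, so $\EAR$ is preserved exactly, and since in the large‑market limit $\opt$ coincides with $\EAR$ for both instances, $\opt$ is preserved up to vanishing error; (b) splitting can only lower each individual monopoly revenue, so the $\epsilon$‑large‑market condition is inherited; and (c) by \cref{cor:large_cdf} every per‑agent demand $\quant_i(\pay)$ is $O(\epsilon)$, so each $N_\pay$ is within $o(1)$ of a Poisson variable pinned down by the aggregate demand $\sum_i\quant_i(\pay)$, and redistributing the sub‑monopoly mass across the split sub‑agents changes this aggregate, hence $\up$, by only $o(1)$.

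Steps A and B together produce an instance of constant‑negative‑virtual‑value distributions satisfying the large‑market assumption whose ratio $\opt/\up$ is at least as large, which is \cref{lem:const_negative_virtual}. The real work is making the two $o(1)$ estimates in Step B rigorous — the $\EAR$/$\opt$ comparison and the Poissonization of $\up$ — and it is precisely here that the large‑market hypothesis is indispensable, since for a fixed market the splitting step could genuinely increase $\up$ and break the reduction.
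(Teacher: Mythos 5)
Your Step~A is, essentially verbatim, the paper's entire proof of \cref{lem:const_negative_virtual}: replace $\rev_i$ above the cutoff quantile by the chord down to $(1,0)$, use the quasi-regularity inequality $\rev_i(\quant)/(1-\quant)\ge \rev_i(\mquant_i)/(1-\mquant_i)$ for $\quant\ge\mquant_i$ to get first-order stochastic dominance (so $\up$ weakly decreases), and observe that the optimal mechanism never allocates in the flattened region (so $\opt$ and the large-market condition are preserved). Where you diverge is in believing that something more is required. For this lemma it is not: the definition of a constant negative virtual value distribution only constrains values with $\virtual(\val)\le 0$; it imposes no concavity or monotonicity on the positive-virtual-value region. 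Taking the cutoff to be the smallest quantile with non-positive virtual value (as the paper's proof does), Step~A's output already lies in the target class, however badly $\rev_i$ dips below $\iron_i$ on the sub-monopoly part. That non-concavity is deliberately left untouched here and is handled only later, in \cref{prop:triangle_is_worst_case}, by ironing and decomposing into triangular pieces.

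Your Step~B is therefore not a proof of this lemma but a sketch of the next proposition's argument, and as a sketch it has genuine gaps: the two $o(1)$ estimates you defer are precisely \cref{lem:optimal_increase} and \cref{lem:uniform_pricing_decreasing}, and your intended route for point~(a) --- that $\opt$ coincides with $\EAR$ in the large-market limit --- is itself a nontrivial unproven claim that the paper avoids (it lower-bounds $\opt(\hat{\dist})$ directly via the virtual welfare of a restricted mechanism). I will grant one legitimate observation buried in your write-up: for a genuinely quasi-regular $\dist_i$ the smallest quantile with non-positive virtual value can strictly precede the argmax of $\rev_i$, in which case the paper's one-line assertion that ``the distribution over positive virtual values is the same'' is not literally true and preserving $\opt$ deserves an extra word. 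But your proposal does not isolate or resolve that issue either; it folds it into an unproven Poissonization/$\EAR$ argument. Net assessment: Step~A alone is the proof and should have been submitted as such; as written, the step you declare to be the crux is incomplete.
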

\begin{proof}
For any agent $i$, and any distribution $\dist_i$, 
let $\mquant$ be the monopoly quantile, i.e., the smallest quantile with non-positive virtual value.
Consider a distribution $\hat{\dist}_i$ such that $\hat{\rev}_i(\quant_i)=\rev_i(\quant_i)$ for any $\quant < \mquant$, 
and $\hat{\rev}_i(\quant_i) = \frac{1-\quant_i}{1-\mquant}$ for any $\quant\in[\mquant,1]$. 
The quasi-regularity of distribution $\dist_i$ implies that $\dist_i$
first-order stochastically dominates $\hat{\dist}_i$. 
By repeating this construction for all $i$, the expected revenue from uniform pricing weakly decreases. 

Moreover, note that in the optimal mechanism, the item is not allocated to agents with negative virtual values. 
In the above construction, the distribution over positive virtual values are the same for both $\dist$ and $\hat{\dist}$. 
Therefore, the optimal revenue remains the same. 
This implies that the large market assumption still holds, 
and the approximation ratio weakly increases. 
\end{proof}

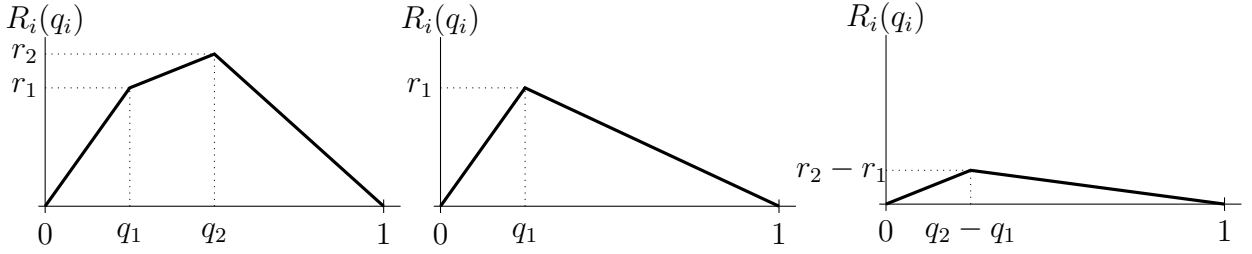
\begin{figure}[t]
\begin{flushleft}
\begin{minipage}[t]{0.31\textwidth}
\centering
\begin{tikzpicture}[scale = 0.45]

\draw (-0.2,0) -- (10.5, 0);
\draw (0, -0.2) -- (0, 5);

\draw [very thick] (0, 0) -- (2.5, 3.5) -- (5, 4.5) -- (10, 0);
\draw [dotted] (2.5,0) -- (2.5, 3.5);
\draw [dotted] (5,0) -- (5, 4.5);
\draw (2.5, -0.8) node {$\quant_1$};
\draw (5, -0.8) node {$\quant_2$};

\draw [dotted] (0,3.5) -- (2.5, 3.5);
\draw [dotted] (0,4.5) -- (5, 4.5);
\draw (-0.6,3.5) node {$r_1$};
\draw (-0.6,4.5) node {$r_2$};

\draw (0, -0.8) node {$0$};
\draw (0, 5.5) node {$\rev_i(\quant_i)$};

\draw (10, -0.2) -- (10, 0.2);
\draw (10, -0.8) node {$1$};
\end{tikzpicture}
\end{minipage}
\begin{minipage}[t]{0.31\textwidth}
\centering
\begin{tikzpicture}[scale = 0.45]

\draw (-0.2,0) -- (10.5, 0);
\draw (0, -0.2) -- (0, 5);

\draw [very thick] (0, 0) -- (2.5, 3.5) -- (10, 0);
\draw [dotted] (2.5,0) -- (2.5, 3.5);
\draw (2.5, -0.8) node {$\quant_1$};

\draw [dotted] (0,3.5) -- (2.5, 3.5);
\draw (-0.6,3.5) node {$r_1$};

\draw (0, -0.8) node {$0$};
\draw (0, 5.5) node {$\rev_i(\quant_i)$};

\draw (10, -0.2) -- (10, 0.2);
\draw (10, -0.8) node {$1$};
\end{tikzpicture}
\end{minipage}
\begin{minipage}[t]{0.36\textwidth}
\centering
\begin{tikzpicture}[scale = 0.45]

\draw (-0.2,0) -- (10.5, 0);
\draw (0, -0.2) -- (0, 5);

\draw [very thick] (0, 0) -- (2.5, 1) -- (10, 0);
\draw [dotted] (2.5,0) -- (2.5, 1);
\draw (2.5, -0.8) node {$\quant_2-\quant_1$};

\draw [dotted] (0,1) -- (2.5, 1);
\draw (-1.3,1) node {$r_2-r_1$};

\draw (0, -0.8) node {$0$};
\draw (0, 5.5) node {$\rev_i(\quant_i)$};

\draw (10, -0.2) -- (10, 0.2);
\draw (10, -0.8) node {$1$};
\end{tikzpicture}
\end{minipage}
\end{flushleft}
\caption{\label{f:decompose}
Decomposition of a revenue curve into triangular ones.}
\end{figure}

\begin{proof}[Proof of \cref{prop:triangle_is_worst_case}]
By \cref{lem:const_negative_virtual}, it is without loss of generality to focus on distributions with constant negative virtual values. 
Moreover, in the $\epsilon$-large market, it is without loss to assume that there are $N$ agents with a triangular distribution $\tri_{\frac{\epsilon}{k},\epsilon}$ in the instance, where $N\to \infty$, 
since including those agents with maximum value $\frac{1}{k}$ would not increase the revenue from \apname~pricing.
However, adding those agents ensures that in the revenue optimal mechanisms, all $k$ units can be sold at least to those agents with price $\frac{1}{k}$. 

Next, we formally introduce our reduction instance. 
Note that for quasi-regular distributions, the revenue curves are not concave in general. 
We denote the ironed revenue curve for agent $i$ as $\iron_i$, 
which is the concave hull of $\rev_i$. 
First, consider a discretization to approximate the ironed revenue curves using piecewise linear functions for all agents $i$. 
Specifically, each discretized revenue curve can be represented by two sequences of parameters $\{r^{(0)}_i, r^{(1)}_i, \dots,r^{(Z_i)}_i\}$
and $\{\quant^{(0)}_i,\quant^{(1)}_i,\dots,\quant^{(Z_i)}_i\}$, where $\quant^{(0)}_i=0, \quant^{(Z_i)}_i = \quant^*_i$, $\quant^{(z)}_i$ is an increasing sequence and $r^{(x)}_i = \rev_i(\quant^{(z)}_i)$ for any $z\in[Z_i]$. 
We decompose the revenue curve $\rev_i$ for agent $i$ into $Z_i$ revenue curves corresponding to $Z_i$ agents with triangular distributions. 
Specifically, for each constructed agent $z\in[Z_i]$, the value distribution is $\tri_{r^{(z)}_i-r^{(z-1)}_i,\quant^{(z)}_i-\quant^{(z-1)}_i}$. 
We use $\hat{\dist}$ to denote the constructed instance. 
Note that the constructed instance $\hat{\dist}$ still satisfies the large market assumption as the monopoly revenue from each individual agent in $\hat{\dist}$ can be infinitesimally small. 

\begin{lemma}\label{lem:optimal_increase}
$\opt(\hat{\dist}) \geq (1-\epsilon\cdot \gamma k\log k)\cdot \opt(\dist)$.
\end{lemma}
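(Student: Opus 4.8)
I would prove the bound by exhibiting a feasible mechanism on $\hat{\dist}$ whose revenue is at least $(1-\delta)\opt(\dist)$, where $\delta:=\epsilon\gamma k\log k\in(0,1)$, and the main device is to rewrite both $\opt(\dist)$ and $\opt(\hat{\dist})$ in a common closed form using the block of $N\to\infty$ background agents $\tri_{\epsilon/k,\epsilon}$ that is part of $\dist$. Each background agent realizes its top value $\tfrac1k$ with probability $\epsilon$, so with probability tending to $1$ at least $k$ of them do; hence in the ironed Myerson optimum, which awards the $k$ units to the $k$ largest positive ironed virtual values, one can always pad the $k$ slots at marginal revenue $\tfrac1k$. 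A short order‑statistics computation then gives, in the $N\to\infty$ limit,
\[
\opt(\dist)\;=\;1\;+\;\expect{\,\text{sum of the $k$ largest entries of }\bigl\{(\bar{\virtual}_i(\quant_i)-\tfrac1k)^+\bigr\}_{i\in[n]}\,},
\]
with $\bar{\virtual}_i=\iron_i'$ the ironed virtual value in quantile space, $\quant_i\sim U[0,1]$, and $[n]$ indexing the non‑background agents; the same identity (now with no ironing, since all triangular distributions are regular) holds for $\hat{\dist}$, where triangular agent $(i,z)$ has a single positive virtual value equal to the $z$‑th segment slope $s^{(z)}_i:=(r^{(z)}_i-r^{(z-1)}_i)/(\quant^{(z)}_i-\quant^{(z-1)}_i)$, realized independently with probability $w^{(z)}_i:=\quant^{(z)}_i-\quant^{(z-1)}_i$:
\[
\opt(\hat{\dist})\;=\;1\;+\;\expect{\,\text{sum of the $k$ largest entries of }\bigl\{(s^{(z)}_i-\tfrac1k)^+\,B_{i,z}\bigr\}_{i,z}\,},\qquad B_{i,z}\sim\mathrm{Bern}(w^{(z)}_i)\ \text{independent}.
\]
Since the Myerson revenue is continuous in the revenue curves and the discretization can be taken arbitrarily fine, it suffices to compare $\opt(\hat{\dist})$ with $\opt^{\mathrm{disc}}(\dist)=1+\expect{\text{top-$k$ of }\{C_i\}_i}$, the right‑hand side of the first identity with $\bar{\virtual}_i$ replaced by its piecewise‑linear version, so that $C_i=(s^{(z)}_i-\tfrac1k)^+$ with probability $w^{(z)}_i$, mutually exclusively over $z$.

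\textbf{Extracting the large‑market structure.} Next I would isolate the one fact we need from the large‑market hypothesis: for each agent $i$, the \emph{big} segments---those with slope $s^{(z)}_i\ge\tfrac1k$, the only ones that can contribute to either identity---together occupy quantile width at most $\delta$. This is short: slopes are non‑increasing, so $z$ being big forces segments $1,\dots,z$ to be big, whence $\iron_i(\quant^{(z)}_i)=\sum_{z'\le z}s^{(z')}_iw^{(z')}_i\ge\tfrac1k\quant^{(z)}_i$; combined with $\iron_i(\quant^{(z)}_i)\le\mrev_i\le\epsilon\opt(\dist)\le\epsilon\gamma\log k$ (the large‑market condition together with $\opt(\dist)\le\gamma\log k$ from \cref{cor:loose_gap} under the normalization) this gives $\quant^{(z)}_i\le\delta$, hence $\sum_{z\text{ big}}w^{(z)}_i\le\delta$. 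Discarding the small segments (on which $(s^{(z)}_i-\tfrac1k)^+=0$) and writing $Y_{i,z}:=(s^{(z)}_i-\tfrac1k)^+$, positive and non‑increasing in $z$ over the big range, reduces \cref{lem:optimal_increase} to the single inequality
\[
\expect{\,\text{top-$k$ of }\{Y_{i,z}B_{i,z}\}_{i,z}\,}\;\ge\;(1-\delta)\,\expect{\,\text{top-$k$ of }\{C_i\}_i\,},
\]
since this yields $\opt(\hat{\dist})-1\ge(1-\delta)\bigl(\opt^{\mathrm{disc}}(\dist)-1\bigr)$, i.e.\ $\opt(\hat{\dist})\ge(1-\delta)\opt^{\mathrm{disc}}(\dist)$, which in the fine‑discretization limit is the claim.

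\textbf{The core stochastic comparison.} The displayed inequality I would establish in three monotonicity steps. (i) For each $i$, replace the independent family $\{Y_{i,z}B_{i,z}\}_z$ by its maximum $M_i:=\max_z Y_{i,z}B_{i,z}$; this only deletes elements from the multiset, so the top‑$k$ sum does not increase. (ii) $M_i$ first‑order stochastically dominates the ``$(1-\delta)$‑thinned'' variable $\tilde C_i$ (equal to $C_i$ with probability $1-\delta$ and to $0$ otherwise): for a big $z$, $\Pr[M_i\ge Y_{i,z}]=1-\prod_{z'\le z}(1-w^{(z')}_i)\ge\quant^{(z)}_i-\tfrac12(\quant^{(z)}_i)^2\ge(1-\tfrac\delta2)\quant^{(z)}_i\ge\Pr[\tilde C_i\ge Y_{i,z}]$, using $\quant^{(z)}_i=\sum_{z'\le z}w^{(z')}_i\le\delta$. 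As the top‑$k$ sum is coordinatewise monotone and the agents are independent, this dominance passes through the expectation. (iii) Conditioning on the realized $\{C_i\}$, the multiset $\{\tilde C_i\}$ is $\{C_i\}$ with each entry independently deleted with probability $\delta$; for any $\ell\le k$ the $\ell$‑th largest value survives with probability $1-\delta$ and, once it survives, is automatically among the top $k$ survivors (at most $\ell-1\le k-1$ larger entries could block it), so $\expect{\text{top-$k$ of }\{\tilde C_i\}\mid\{C_i\}}\ge(1-\delta)(\text{top-$k$ of }\{C_i\})$. Chaining (i)--(iii) gives the inequality.

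\textbf{Main obstacle.} The crux is step (ii) together with the width bound feeding it: everything hinges on the large‑market fact that the big segments of any one agent carry total quantile mass at most $\delta$, so that the probability that two of that agent's own triangular copies \emph{simultaneously} realize their top values---the sole source of loss when an agent is split into independent pieces---is $O(\delta)$. Without the large‑market assumption this self‑collision probability is $\Theta(1)$ and the reduction to triangular instances genuinely fails, consistently with the $\Theta(\log k)$ worst case of \cref{prop:loose_gap}/\cref{cor:loose_gap}. The remaining ingredients---the background‑saturation identity for top‑$k$ sums, coordinatewise monotonicity of the top‑$k$ functional, continuity of $\opt(\cdot)$ in the revenue curves, and the elementary deletion bound of step (iii)---are routine; the only delicacy is that the triangular decomposition is finite, so one reaches $\opt^{\mathrm{disc}}(\dist)$ exactly and passes to $\opt(\dist)$ only in the refinement limit, which is harmless in the $\epsilon\to0$ regime of \cref{prop:triangle_is_worst_case}.
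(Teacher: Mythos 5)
Your argument is correct and follows essentially the same route as the paper's: both lower-bound $\opt(\hat{\dist})$ by letting each original agent's $Z_i$ triangular copies contribute at most one winner (your $M_i$ is exactly the paper's ``at most one of the $Z_i$ agents can win'' restriction), both use the background agents to reduce to virtual surplus above $\tfrac1k$, and both charge the entire loss to the within-agent self-collision probability $\prod_{z'<z}(1-w^{(z')}_i)\ge 1-\quant^{(z-1)}_i\ge 1-\epsilon\gamma k\log k$. The only difference is bookkeeping at the end: the paper ports the interim allocation $\alloc_i(z)$ of $\dist$'s optimal mechanism directly into a feasible mechanism for $\hat{\dist}$ and bounds its virtual surplus termwise, whereas you compare the two optima via stochastic dominance plus a thinning bound on the top-$k$ functional --- both are valid.
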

\begin{proof}
We adopt the result from \citet{M81} where the expected revenue equals the expected virtual value.
To provide a lower bound on the optimal revenue for $\hat{\dist}$, we consider a restricted mechanism where at most one of the $Z_i$ agents can win an item. 
Therefore, we can view those $Z_i$ agents as a single agent whose contribution to the virtual welfare equals the maximum individual virtual welfare. 
Moreover, due to the existence of those additional $N$ agents, it is sufficient to count only the virtual welfare contribution for virtual value at least $\frac{1}{k}$. 
Letting $z_i$ be the maximum index such that the corresponding virtual value is at least $\frac{1}{k}$, i.e., 
\begin{align*}
\frac{r^{(z)}_i-r^{(z-1)}_i}{\quant^{(z)}_i-\quant^{(z-1)}_i} \geq \frac{1}{k}.
\end{align*}
Let $\alloc_i(z)$ be the interim allocation of agent $i$ when his virtual value corresponds to the interval for the decomposed agent~$z$. 
We have, 
\begin{align*}
\opt(\hat{\dist}) &\geq \sum_{i\in[n]} \sum_{z\in [z_i]} (1-\quant^{(z-1)}_i)\frac{r^{(z)}_i-r^{(z-1)}_i}{\quant^{(z)}_i-\quant^{(z-1)}_i} \cdot \alloc_i(z) \cdot (\quant^{(z)}_i-\quant^{(z-1)}_i)\\
&\geq (1-\epsilon\cdot \gamma k\log k) \cdot \sum_{i\in[n]} \sum_{z\in [z_i]} (r^{(z)}_i-r^{(z-1)}_i) \cdot \alloc_i(z)
= (1-\epsilon\cdot \gamma k\log k) \cdot \opt(\dist),
\end{align*}
where the inequality holds since in $\epsilon$-large markets, we have $\quant^{(z)}_i \leq \epsilon\cdot \gamma k\log k$. 
\end{proof}

\begin{lemma}\label{lem:uniform_pricing_decreasing}
$\up(\hat{\dist})\leq \frac{1+\epsilon\cdot \gamma k\log k}{1-\epsilon\cdot \gamma k\log k}\cdot \up(\dist)$.
\end{lemma}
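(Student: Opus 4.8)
The plan is to fix an optimal anonymous price $p$ for $\hat\dist$ and to show $\up(p,\hat\dist)\le\frac{1+\epsilon\gamma k\log k}{1-\epsilon\gamma k\log k}\cdot\up(\dist)$. Under the normalization in force we take $\up(\dist)=1$, so by \cref{cor:loose_gap} we have $\opt(\dist)\le\gamma\log k$ and hence, by the $\epsilon$-large market condition, $\rev_i(\mquant_i)\le\epsilon\gamma\log k$ for every agent $i$. If $p\le\tfrac1k$, then at most $k$ units are sold at price $p$, so $\up(p,\hat\dist)\le p\cdot k\le 1=\up(\dist)$ and the bound holds trivially; from now on assume $p>\tfrac1k$, in which case the $N$ auxiliary agents --- whose supports lie in $[0,\tfrac1k]$ --- never transact, and only the decomposed triangular agents contribute.

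The next step is to pin down the demand these agents generate at price $p$. By concavity of the ironed curve $\iron_i$, the top values $\bigl(r^{(z)}_i-r^{(z-1)}_i\bigr)/\bigl(\quant^{(z)}_i-\quant^{(z-1)}_i\bigr)$ of the triangular pieces of agent $i$ are non-increasing in $z$, so the pieces that can possibly sell at price $p$ form an initial segment $z=1,\dots,z_i$ --- exactly those lying on the portion of $\iron_i$ whose marginal revenue exceeds $p$. As the discretization mesh shrinks, piece $z$ sells with probability of order $\tfrac1p\bigl(r^{(z)}_i-r^{(z-1)}_i\bigr)$, so the expected number of decomposed agents of $i$ that sell converges to $\Lambda_i:=\tfrac1p\,\iron_i(q_i)$, where $q_i$ is the quantile at which the marginal revenue of $\iron_i$ equals $p$ (and, being a sum of vanishingly small independent indicators, this count also concentrates, although for the upper bound only $\expect{\hat D_i}\to\Lambda_i$ is needed). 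Since the tangent line of $\iron_i$ at $q_i$ lies above $\iron_i(0)=0$ we have $\iron_i(q_i)\ge p\,q_i$, while $\iron_i(q_i)\le\max_q\iron_i(q)=\rev_i(\mquant_i)\le\epsilon\gamma\log k$; combined with $p>\tfrac1k$ this gives the uniform bound $q_i\le\epsilon\gamma k\log k$, in the spirit of \cref{cor:large_cdf}.

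The final step aggregates and compares with anonymous pricing on the original instance $\dist$. Writing $\hat D=\sum_i\hat D_i$, concavity of $\min(\cdot,k)$ gives $\up(p,\hat\dist)=p\,\expect{\min(\hat D,k)}\le p\cdot\min\bigl(\sum_i\Lambda_i,\,k\bigr)$, so it suffices to control $\sum_i\iron_i(q_i)$ (when $\hat\dist$ is not supply-limited at $p$) and $p\,k$ (when it is). For an agent $i$ whose monopoly price is at most $p$ one has $q_i\le\quant_i(p)$ and, using the concave-hull identity $\iron_i(q_i)=p\,q_i+\max_q\bigl(\rev_i(q)-p\,q\bigr)$, $\iron_i(q_i)\le\rev_i(\quant_i(p))$ --- i.e. agent-by-agent domination by the contribution of $i$ to anonymous pricing on $\dist$, with no loss. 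For an agent whose monopoly price exceeds $p$, $\iron_i(q_i)$ need not be so dominated, but $\iron_i(q_i)\le\rev_i(\mquant_i)\le\epsilon\gamma\log k$ is tiny; the total mass of such agents is then absorbed either by the supply constraint (if it is large, $\hat\dist$ is supply-limited, $\up(p,\hat\dist)\le p\,k$, and optimality of $p$ keeps $\sum_i\Lambda_i$ from much exceeding $k$, so $p\,k$ is essentially $\sum_i\iron_i(q_i)$ again) or directly (if it is small, the accumulated excess is $o(\up(\dist))$), and tracking the constants yields the loss factor $\frac{1+\epsilon\gamma k\log k}{1-\epsilon\gamma k\log k}$.

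The main obstacle is exactly this last comparison. The decomposition replaces each true revenue curve $\rev_i$ by its concave hull and then by triangular surrogates, and both operations can strictly \emph{increase} the demand at a fixed price --- for an agent whose monopoly price exceeds $p$ one can have $\Lambda_i>\quant_i(p)$ --- so one cannot argue agent by agent in general, and in the quasi-regular case even the ``nice'' agent-by-agent inequality must be extracted from the concave-hull identity rather than from monotonicity of $\rev_i$. The crux is that this excess demand is negligible per agent --- this is precisely where the large-market hypothesis enters, through $\rev_i(\mquant_i)=o(\opt(\dist))$ and hence $q_i=o(1)$ --- and that, where agent-by-agent domination fails, the supply cap reduces $\up(p,\hat\dist)$ to $p\,k$; the most delicate point is handling the supply-limited boundary so that the variance of $\hat D$ near saturation does not leak in an extra loss, leaving the clean factor $\frac{1+\epsilon\gamma k\log k}{1-\epsilon\gamma k\log k}$.
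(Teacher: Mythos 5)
Your computation of the split agents' aggregate demand, $\Lambda_i\approx\iron_i(q_i)/p$, matches the paper's quantity $\hat A_i(p)$, but the proof then breaks at exactly the point you yourself flag as ``the main obstacle,'' and the workaround you sketch does not close. The idea you are missing is that the comparison \emph{is} agent-by-agent for \emph{every} agent, including those whose monopoly price exceeds $p$. Quasi-regularity is equivalent to the chord slope $\frac{\rev_i(1)-\rev_i(\quant)}{1-\quant}$ being non-increasing in $\quant$, which forces $\rev_i(\quant)\ge r\cdot\frac{1-\quant}{1-q}$ for all $\quant\ge q$ whenever $(q,r)$ is a point of the revenue curve; applying this at the last discretization point $(\quant^{(z_i)}_i,r^{(z_i)}_i)$ whose pieces can sell at $p$ gives $\quant_i(p)\ge\frac{r^{(z_i)}_i}{r^{(z_i)}_i+(1-\quant^{(z_i)}_i)p}$ for the \emph{original} agent. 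Since $p\ge 1/k$ and the $\epsilon$-large-market condition gives $r^{(z_i)}_i\le\rev_i(\mquant_i)\le\epsilon\gamma\log k\le\epsilon\gamma k\log k\cdot p$, this is at least $\frac{r^{(z_i)}_i}{(1+\epsilon\gamma k\log k)p}$, while the split pieces' total acceptance probability is at most $\frac{r^{(z_i)}_i}{(1-\epsilon\gamma k\log k)p}$ — a per-agent match up to exactly the factor claimed in the lemma. Your worry that $\Lambda_i$ can exceed $\quant_i(p)$ by a large factor (e.g.\ a $U[0,c]$ agent at $p=c/10$) is real only when $p$ is comparable to agent $i$'s own scale; in an $\epsilon$-large market every agent's monopoly revenue is $O(\epsilon\log k)$, so any such price lies below $1/k$ and is dismissed by your own trivial case.

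Because you do not have this per-agent bound, your global argument carries the whole weight, and it has concrete gaps. In the ``small excess'' branch, the per-agent excess revenue is $O(\epsilon\gamma\log k)$ but there can be $\Theta(1/(\epsilon\gamma))$ such agents (the sum of monopoly revenues can be $\Theta(\log k)$ while $\up(\dist)=1$), so the accumulated excess can be $\Theta(\log k)\cdot\up(\dist)$, not $o(\up(\dist))$. In the supply-limited branch, the claims that ``optimality of $p$ keeps $\sum_i\Lambda_i$ from much exceeding $k$'' and that $p\,k\lesssim\up(\dist)$ are asserted, not derived; nothing you wrote shows the original instance sells close to $k$ units at any price near $p$. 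Finally, your Jensen step $\expect{\min(\hat D,k)}\le\min(\sum_i\Lambda_i,k)$ only bounds the $\hat\dist$ side; you still need to relate this to $\expect{\min(D,k)}$ for $\dist$, which can be strictly smaller than $\min(\expect{D},k)$. The paper sidesteps this entirely with a per-agent availability coupling (the probability that a unit remains when agent $i$ is reached is weakly larger before the split), which combines directly with the per-agent demand comparison to give the clean multiplicative factor.
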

\begin{proof}
For any agent $i$ and any price $p\geq \frac{1}{k}$, 
let $z_i\in[Z_i]$ be the maximum index such that 
$\frac{r^{(z_i)}_i}{\quant^{(z_i)}_i} \leq p$. 
Given price $p$, due to quasi-regularity, the probability that agent $i$ is willing to accept the price is at least 
\begin{align*}
A_i(p)\triangleq\frac{r^{(z_i)}_i}{r^{(z_i)}_i+(1-\quant^{(z_i)}_i)\cdot p} \geq \frac{r^{(z_i)}_i}{\epsilon\cdot \gamma \log k + p} \geq \frac{r^{(z_i)}_i}{(1+\epsilon\cdot \gamma k\log k)\cdot p},
\end{align*}
where the first inequality holds since in $\epsilon$-large markets, we have $r^{(z)}_i \leq \epsilon\cdot \gamma \log k$, 
and the last inequality holds since $p\geq \frac{1}{k}$. 

In the decomposed instance, for any $z > z_i$, the maximum value for those decomposed agents is less than $p$, so they will never accept the price. 
Let $\hat{z}_i \leq z_i$ be the maximum index such that the maximum value for agent $\hat{z}_i$ is at least $p$. 
The total probability of all agents who are willing to accept price $p$ is at most 
\begin{align*}
\hat{A_i}(p)&\triangleq\sum_{z=1}^{\hat{z}_i} \frac{r^{(z)}_i-r^{(z-1)}_i}{r^{(z)}_i-r^{(z-1)}_i+(1-\quant^{(z)}_i+\quant^{(z-1)}_i)\cdot p}\\
&\leq \sum_{z=1}^{\hat{z}_i} \frac{r^{(z)}_i-r^{(z-1)}_i}{(1-\epsilon\cdot \gamma k\log k)\cdot p}
\leq \frac{r^{(z_i)}_i}{(1-\epsilon\cdot \gamma k\log k)\cdot p}.
\end{align*}
The first inequality holds since $r^{(z)}_i$ is increasing in $z$, and $\quant^{(z)}_i \leq \epsilon\cdot \gamma k\log k$ for any $z$ under the $\epsilon$-large market assumption. 

Now we bound the expected number of accepts in $k$-unit auctions. 
Fix any sequence of agents in the \apname~pricing mechanism, which does not affect the expected revenue. Let $Q_i(p)$ and $\hat{Q}_i(p)$ be the probabilities such that there exists at least one unit remaining for agent $i$ given price $p$ before and after splitting the agents. 
Note that $Q_i(p) \geq \hat{Q}_i(p)$ for any $i$ and any $p$. 
Letting $\hat{p}\geq \frac{1}{k}$ be the optimal price that maximize the revenue from \apname~pricing given $\hat{\dist}$, have
\begin{equation*}
\up(\hat{\dist}) \leq \hat{p}\cdot \sum_i \hat{Q}_i(\hat{p}) \cdot \hat{A}_i(\hat{p}) 
\leq \hat{p}\cdot \sum_i Q_i(\hat{p}) \cdot A_i(\hat{p}) \cdot \frac{1+\epsilon\cdot \gamma k\log k}{1-\epsilon\cdot \gamma k\log k} 
\leq \frac{1+\epsilon\cdot \gamma k\log k}{1-\epsilon\cdot \gamma k\log k} \cdot \up(\dist).\qedhere
\end{equation*}
\end{proof}

Combining \cref{lem:optimal_increase} and \cref{lem:uniform_pricing_decreasing}, the approximation ratio decreases by a multiplicative factor at most
\begin{equation*}
\frac{1+\epsilon\cdot \gamma k\log k}{(1-\epsilon\cdot \gamma k\log k)^2}. \qedhere
\end{equation*}
\end{proof}

\subsection{Order Statistics}
\label{subsec:order_statistics}
Given \cref{prop:triangle_is_worst_case}, it is without loss to focus on triangular distributions. 
Moreover, the entire instance can be represented as a function of the order statistics. 
We use $\os_j(\val)$ to denote the cumulative distribution function for the $j$th order statistic for any $j\in [k]$.

Recall that for each agent $i$ with a triangular value distribution, the distribution can be captured by two parameters: monopoly revenue $\mrev_i$ and monopoly quantile $\mquant_i$. 
Let $\mval_i = \frac{\mrev_i}{\mquant_i}$ be the monopoly value. 
Given any instance with triangular distributions, the instance can be captured by the \textit{cumulative monopoly revenue function} $R(x) \eqdef \sum_{i \in [n]: \mval_i \ge x} \mrev_i$ and the \textit{cumulative monopoly quantile function} $Q(x) \eqdef \sum_{i \in [n]: \mval_i \ge x} \mquant_i$. 
By definition, both $R(x)$ and $Q(x)$ are decreasing functions.
Moreover, under the large market assumption, for any $x\geq \frac{1}{k}$, we have
\begin{align*}
\log \os_{1}(x)
&= \sum_{i \in [n]: \mval_i \ge x} \log \frac{x}{x + \frac{\mrev_i}{1 - \mquant_i}}\\
&= \sum_{i \in [n]: \mval_i \ge x} -\frac{\mrev_i}{x} \cdot(1+O(\epsilon k^2\log k))
= -\frac{R(x)}{x}\cdot(1+O(\epsilon k^2\log k))
\end{align*}
where the second equality holds since under the $\epsilon$-large market assumption, we have $\mquant_i \leq \frac{1}{x} k\log k$ and $\mrev_i \leq \epsilon k\log k$ for any agent $i$ such that $\mval_i \geq x$. 
Therefore, in the large market limit where $\epsilon\to 0$, we have
\begin{align}\label{eq:calculation:2}
R(x) = x \cdot (-\ln \os_{1}(x)).
\end{align}
Moreover, the cumulative monopoly quantile function $Q(x)$ can also be pinned down as 
\begin{align}\label{eq:calculation:3}
Q(x) = \int_{x}^{+\infty} -\frac{R'(z)}{z} \d z.
\end{align}

In addition to the above characterization, the expected revenue of the \apname~pricing mechanism, given any price $\pay$, can also be expressed as a function of the order statistics as follows:
\begin{align}\label{eq:ap_os}
\up(\pay, \dist) = \pay\cdot \sum_{j\in[k]} (1-\os_j(\pay)).
\end{align}
These characterizations indicate the usefulness of order statistics in quantifying the worst case gap of \apname~pricing compared to the optimal revenue. 
The details of these analyzes are provided in the subsequent sections. 
In this section, we will show that in large markets, all the higher-order statistics can be uniquely determined by the first-order statistics.

We defined the approximate $j$th order statistic as\footnote{If $\os_1(\pay) = 0$, we define $\widehat{\os}_j(\pay) = 0$.} 
\begin{align*}
\widehat{\os}_j(\pay) = \os_1(\pay)\cdot \sum_{0\leq t\leq j-1} \frac{1}{t!} (-\ln \os_1(\pay))^t.
\end{align*}
\begin{proposition}\label{prop:order_statistics}
Given any $j\in[k]$, any parameter $\delta\leq \frac{1}{4j}$ and any price $\pay>0$,
if $\dist_i(\pay)\geq 1-\delta$ for all $i$,
we have 
\begin{align*}
\os_j(\pay) \in \sbr{(1-\delta j)\cdot \widehat{\os}_j(\pay), (1+2\delta j)\cdot\widehat{\os}_j(\pay)}.
\end{align*}
\end{proposition}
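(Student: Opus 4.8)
The plan is to read the two quantities in the statement as, respectively, the true and the ``Poissonized'' distribution of a sum of small independent Bernoulli indicators, and then to bound the one by the other by two separate arguments. Fix the price $\pay$, let $p_i$ be the probability that agent $i$'s value exceeds $\pay$, so the hypothesis gives $p_i\le\delta$ for all $i$ (for a distribution with an atom at $\pay$ one takes the tie convention consistent with the definition of $\os_j$; the distinction disappears in the atomless case), and let $N:=\sum_i\indicate{\val_i\ge\pay}$ be the number of such agents, so that $\os_j(\pay)=\prob{N\le j-1}$ and $\os_1(\pay)=\prod_i(1-p_i)$. Since $\delta<1$ each $p_i<1$, hence $\os_1(\pay)>0$ and the degenerate case of the footnote does not arise. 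Writing $\Lambda:=-\ln\os_1(\pay)=\sum_i\bigl(-\ln(1-p_i)\bigr)$, the definition unwinds to $\widehat{\os}_j(\pay)=e^{-\Lambda}\sum_{t=0}^{j-1}\Lambda^t/t!=\prob{\text{Poisson}(\Lambda)\le j-1}$; note that the rate $\Lambda$ is chosen exactly so that the true and Poissonized statistics agree for $j=1$.

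For the lower bound I would use a monotone coupling, which in fact yields the sharper bound $\os_j(\pay)\ge\widehat{\os}_j(\pay)$. Take independent $Y_i\sim\text{Poisson}\bigl(-\ln(1-p_i)\bigr)$; then $\prob{Y_i=0}=1-p_i$, so $\indicate{Y_i\ge1}$ has law $\text{Bernoulli}(p_i)$, and $\sum_iY_i\sim\text{Poisson}(\Lambda)$. Realizing $N=\sum_i\indicate{Y_i\ge1}$ and $M:=\sum_iY_i$ on this common space, the pointwise inequality $\indicate{Y_i\ge1}\le Y_i$ gives $N\le M$, hence $\os_j(\pay)=\prob{N\le j-1}\ge\prob{M\le j-1}=\widehat{\os}_j(\pay)\ge(1-\delta j)\,\widehat{\os}_j(\pay)$.

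For the upper bound I would pass to elementary symmetric polynomials. Setting $\rho_i:=p_i/(1-p_i)$, the generating function $\expect{s^{N}}=\prod_i(1-p_i+p_is)=\os_1(\pay)\prod_i(1+\rho_is)$ gives $\prob{N=m}=\os_1(\pay)\,e_m(\rho)$, hence $\os_j(\pay)=\os_1(\pay)\sum_{m=0}^{j-1}e_m(\rho)$, where $e_m$ is the $m$-th elementary symmetric polynomial and $\Lambda=\sum_i\ln(1+\rho_i)$. Three elementary estimates then finish the argument: (i) expanding $\bigl(\sum_i\rho_i\bigr)^m$ and dropping the terms with a repeated index gives $e_m(\rho)\le\sigma_1^m/m!$ with $\sigma_1:=\sum_i\rho_i$; (ii) from $\ln(1+x)\ge x-x^2/2$ and $\rho_i\le\delta/(1-\delta)\le\tfrac43\delta$ one gets $\Lambda\ge(1-\tfrac23\delta)\sigma_1$, so that $\sigma_1\le(1+\delta)\Lambda$; (iii) since $m\le j-1$ and $\delta\le\tfrac1{4j}$ we have $m\delta\le\tfrac14$, so $(1+\delta)^m\le e^{m\delta}\le1+2m\delta$. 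Combining these, $e_m(\rho)\le\Lambda^m(1+\delta)^m/m!\le\bigl(1+2(j-1)\delta\bigr)\Lambda^m/m!$ for every $m\le j-1$, and summing over $m$ and multiplying by $\os_1(\pay)$ yields $\os_j(\pay)\le\bigl(1+2(j-1)\delta\bigr)\widehat{\os}_j(\pay)\le(1+2\delta j)\,\widehat{\os}_j(\pay)$.

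Neither half is technically heavy; the work is in choosing the right tool for each direction, and the point worth flagging is precisely the asymmetry. A symmetric-function lower bound on $e_m$ --- say $m!\,e_m\ge\sigma_1^m-\binom m2\sigma_2\sigma_1^{m-2}$ --- degrades when $\sigma_1$ is small and so cannot give a clean $1-O(j\delta)$ factor uniformly in $\Lambda$; the coupling circumvents this because it dominates $N$ by $M$ pointwise regardless of the size of $\Lambda$, using the exact fact $\indicate{\text{Poisson}(\lambda)\ge1}\sim\text{Bernoulli}(1-e^{-\lambda})$ together with the matching of the Poisson rate to $-\ln\os_1(\pay)$ (and not to $\sum_ip_i$) that is built into the definition of $\widehat{\os}_j$. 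In the other direction $e_m(\rho)\le\sigma_1^m/m!$ loses essentially nothing and the constant $2\delta j$ comes out with room to spare, so no careful optimization is required there.
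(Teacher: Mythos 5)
Your proof is correct. The upper bound is essentially the paper's argument in different clothing: both reduce $\os_j(\pay)$ to $\os_1(\pay)\sum_{m\le j-1} e_m(\rho)$ with $\rho_i = 1/\dist_i(\pay)-1$, bound $e_m$ by $\sigma_1^m/m!$ by dropping repeated-index terms, and absorb the gap between $\sigma_1=\sum_i\rho_i$ and $\Lambda=-\ln\os_1(\pay)$ into a per-term factor $(1+\delta)^m\le 1+2\delta j$ (the paper does this via the pointwise inequality $\rho_i\le -(1+\delta)\ln\dist_i(\pay)$ rather than your aggregated $\sigma_1\le(1+\delta)\Lambda$, but the two are interchangeable). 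The lower bound is where you genuinely depart. The paper stays on the symmetric-function route: it lower-bounds $\rho_i\ge-\ln\dist_i(\pay)$, interprets $t!\,e_t\bigl(-\ln\dist_{\cdot}(\pay)\bigr)/\Lambda^t$ as the probability that $t$ i.i.d.\ draws are all distinct, bounds the collision probability by $\delta t(t-1)/\Lambda$, and then needs a case split on $\Lambda\ge j$ versus $\Lambda\le j$ precisely because the relative collision bound degrades for small $\Lambda$ --- exactly the obstruction you flag. Your monotone Poisson coupling, with $Y_i\sim\mathrm{Poisson}\bigl(-\ln(1-p_i)\bigr)$ and $N=\sum_i\indicate{Y_i\ge1}\le\sum_iY_i=M$, sidesteps the case analysis entirely and yields the strictly stronger conclusion $\os_j(\pay)\ge\widehat{\os}_j(\pay)$ with no loss factor; it works because $\widehat{\os}_j$ is defined with Poisson rate $-\ln\os_1(\pay)$ rather than $\sum_ip_i$, which is exactly what makes the marginals $\Pr[Y_i=0]=\dist_i(\pay)$ match. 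Both routes are valid; yours buys a cleaner and sharper lower bound at the cost of an auxiliary probability space, while the paper's remains purely combinatorial.
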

In a $\epsilon$-large market, for any $\pay\geq \frac{1}{k}$, 
we have $\dist_i(\pay)\geq 1 - \epsilon k^2\log k$. 
In the limit where $\epsilon \to 0$, we have 
$\os_j(\pay) = \widehat{\os}_j(\pay)$ for all $j\in[k]$. 

\begin{proof}[Proof of \cref{prop:order_statistics}]
If $\os_1(\pay)=0$, then by definition $\widehat{\os}_j(\pay)=0$ and also $\os_j(\pay)=0$,
so the claim holds trivially. In the rest of the proof, we assume $\os_1(\pay)>0$.
Moreover, since the bounds hold trivially for $j=1$, we focus on the case when $j\geq 2$.

Note that for any $j\in[k]$ and any $\pay > 0$, 
\begin{align}
\os_j(\pay) &= \sum_{0\leq t\leq j-1} \sum_{|W|=t} \rbr{\prod_{z\not\in W} \dist_z(\pay)}
\cdot \rbr{\prod_{z\in W} (1-\dist_z(\pay))} \nonumber\\
&= \os_1(\pay) \cdot \sum_{0\leq t\leq j-1} \sum_{|W|=t} \rbr{\prod_{z\in W} \rbr{\frac{1}{\dist_z(\pay)}-1}}. \label{eq:osj_expression}
\end{align}
where the last equality holds by observing that $\prod_{i\in [n]} \dist_z(\pay) = \os_1(\pay)$.

\paragraph{Lower bound on $\os_j(\pay)$.} 
We first show that $\frac{1}{\dist_z(\pay)} - 1 \geq -\ln \dist_z(\pay)$. This is because, for function $g(x) = \frac{1}{x}-1+\ln x$, we have $g(1) = 0$ and $g'(x) = \frac{1}{x}-\frac{1}{x^2} \leq 0$ for any $x\in(0,1)$. 
The desirable inequality holds by observing that $\dist_z(\pay) \leq 1$ for any $\pay$. 
Therefore, we have 
\begin{align}\label{eq:lb1}
\os_j(\pay) \geq \os_1(\pay) \cdot \sum_{0\leq t\leq j-1} \sum_{|W|=t} \rbr{\prod_{z\in W} \rbr{-\ln \dist_z(\pay)}}.
\end{align}
Let $\Lambda = \sum_{i\in[n]} -\ln \dist_i(\pay)$
and let $P_i = \frac{-\ln \dist_i(\pay)}{\Lambda}$. 
Consider the process of drawing $t$ times with replacement from $n$ elements, where the probability that element $i$ is chosen is $P_i$. 
The probability of all $t$ elements being distinct is 
\begin{align*}
P_{\mathrm{DISTINCT}}(t)\triangleq t! \cdot \sum_{|W|=t} \prod_{z\in W} P_i
= \frac{t!}{\Lambda^t} \sum_{|W|=t} \prod_{z\in W} -\ln \dist_z(\pay).
\end{align*}
Substituting in \Cref{eq:lb1} and observing that $\Lambda = -\ln \os_1(\pay)$, we have 
\begin{align*}
\os_j(\pay) - \widehat{\os}_j(\pay) \geq 
\os_1(\pay) \cdot \sum_{0\leq t\leq j-1} \frac{\Lambda^t}{t!}(P_{\mathrm{DISTINCT}}(t)-1).
\end{align*}
Next, we upper bound the probability $P_{\mathrm{COLLISION}}(t) = 1- P_{\mathrm{DISTINCT}}(t)$. 
It is easy to verify that $P_{\mathrm{COLLISION}}(t)=0$ for $t=0$ or $1$. 

For $t\geq 2$, we first bound the probabilities $P_i$. Note that since $\delta\leq \frac{1}{2}$ and $\dist_i(\pay) \geq 1-\delta$ for any $i\in[n]$, we have 
$-\ln \dist_i(\pay) \leq \frac{1}{\dist_i(\pay)} - 1 \leq \frac{\delta}{1-\delta} \leq 2\delta$. 
Thus, $\max_{i\in[n]}P_i \leq \frac{2\delta}{\Lambda}$.
By the union bound over $\binom{t}{2}$ pairs of draws for collision, we have 
\begin{align*}
P_{\mathrm{COLLISION}}(t) \leq \binom{t}{2} \cdot \sum_{i\in[n]} P_i^2 \leq \frac{\delta t(t-1)}{\Lambda} 
\end{align*}
where the last inequality holds since $\sum_{i\in[n]} P_i^2 \leq \max_{i\in[n]} P_i \cdot \sum_{i\in[n]} P_i = \max_{i\in[n]} P_i \leq \frac{2\delta}{\Lambda}$.
Next, we divide the analysis further into two cases. 
\begin{itemize}
\item $\Lambda \geq j$. In this case, we have $P_{\mathrm{COLLISION}}(t) \leq \delta j$ for any $t\leq j-1$, and hence
\begin{align*}
\os_j(\pay) - \widehat{\os}_j(\pay) \geq - \delta j \cdot \os_1(\pay) \cdot \sum_{0\leq t\leq j-1} \frac{\Lambda^t}{t!}
= - \delta j \cdot \widehat{\os}_j(\pay).
\end{align*}

\item $\Lambda \leq j$. In this case, since $P_{\mathrm{COLLISION}}(t)=0$ for $t=0$ or $1$, we have 
\begin{align*}
\os_j(\pay) - \widehat{\os}_j(\pay) \geq 
- \os_1(\pay) \cdot \sum_{2\leq t\leq j-1} \frac{\Lambda^t}{t!} \cdot \frac{\delta t(t-1)}{\Lambda}
= - \delta \Lambda \cdot \os_1(\pay) \cdot \sum_{0\leq t\leq j-3} \frac{\Lambda^t}{t!}
\geq - \delta j \cdot \widehat{\os}_j(\pay),
\end{align*}
where the last inequality holds since $\Lambda \leq j$ and $\Lambda > 0$. 
\end{itemize}
Combining the two cases, we have the desirable lower bound of
\begin{align*}
\os_j(\pay) \geq (1-\delta j)\cdot \widehat{\os}_j(\pay).
\end{align*}

\paragraph{Upper bound on $\os_j(\pay)$.} 
In this case, we first prove that 
\begin{align}\label{eq:upperbound_ln}
\frac{1}{\dist_z(\pay)} - 1 \leq - (1+\delta)\cdot \ln \dist_z(\pay).
\end{align}
First note that since $\dist_z(\pay) \geq 1-\delta$, we have
\begin{align*}
\frac{1-\dist_z(\pay)}{\dist_z(\pay)} - \bigl(-\ln \dist_z(\pay)\bigr)
= \int_0^{1-\dist_z(\pay)} \frac{x}{(1-x)^2}\,dx
\leq 2\int_0^{1-\dist_z(\pay)} x\,dx
= (1-\dist_z(\pay))^2
\leq - \delta \cdot \ln \dist_z(\pay),
\end{align*}
where the first inequality holds since $(1-x)^2 \geq (1-\delta)^2 \geq \frac{1}{2}$ for $x\leq 1-\dist_z(\pay) \leq \delta$ and $\delta \leq \frac{1}{4}$, 
and the second inequality holds since $1-\dist_z(\pay) \leq \delta$ and $1-\dist_z(\pay) \leq - \ln \dist_z(\pay)$ for $\dist_z(\pay) \in (0,1]$.
By rearranging the terms, \Cref{eq:upperbound_ln} holds. 
By applying \Cref{eq:upperbound_ln} in \Cref{eq:osj_expression}, we have
\begin{align*}
\os_j(\pay) &\leq \os_1(\pay) \cdot \sum_{0\leq t\leq j-1} (1+\delta)^t \cdot \sum_{|W|=t} \rbr{\prod_{z\in W} \rbr{-\ln \dist_z(\pay)}}\\
&= \os_1(\pay) \cdot \sum_{0\leq t\leq j-1} (1+\delta)^t \cdot  \frac{\Lambda^t}{t!}\cdot P_{\mathrm{DISTINCT}}(t)
\leq \os_1(\pay) \cdot \sum_{0\leq t\leq j-1} (1+\delta)^t \cdot \frac{\Lambda^t}{t!}.
\end{align*}
Here the equality holds by the same argument in the lower bound part, and the last inequality holds since $P_{\mathrm{DISTINCT}}(t) \leq 1$ for any $t$. 
Since $\delta \leq \frac{1}{4j}$, we have $\delta t\leq \delta j \leq \frac{1}{4}$, and hence 
\begin{align*}
(1+\delta)^t \leq e^{\delta t}\leq 1+2\delta t
\leq 1+2\delta j.
\end{align*}
Combining the inequalities, we have
\begin{equation*}
\os_j(\pay) \leq (1+2\delta j)\cdot \os_1(\pay) \cdot \sum_{0\leq t\leq j-1} \frac{\Lambda^t}{t!} 
= (1+2\delta j)\cdot \widehat{\os}_j(\pay). \qedhere
\end{equation*}
\end{proof}

\subsection{Asymptotic Upper Bounds}
\label{subsec:bound_calculation}

In this subsection, we will show that
\begin{align*}
    \opt(F^{*})
    \le \frac{2}{1 - k^{k} / (k! \cdot e^{k})}
    \le \frac{2}{1 - 1 / \sqrt{2\pi k}}.
\end{align*}
The last inequality applies Stirling's approximation.

\begin{lemma}\label{lem:worst_os}
The first-order statistic of the worst case instance is given by the following implicit function:
\begin{align}\label{eq:calculation:4}
1 = x \cdot \left(k - \sum_{t \in [0 : k - 1]} \frac{k - t}{t!} \cdot \os_{1}(x) \cdot (-\ln \os_{1}(x))^{t}\right), \quad\forall x\geq \frac{1}{k}.
\end{align}
\end{lemma}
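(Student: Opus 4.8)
The plan is to locate the worst-case instance as the triangular large-market instance that leaves \emph{no slack} in the anonymous-pricing constraint, and then to read \cref{eq:calculation:4} off the closed form of $\up(\pay,\dist)$. By \cref{prop:triangle_is_worst_case}, together with the $N\to\infty$ infinitesimal agents at monopoly value $\tfrac1k$ that the reduction appends, it suffices to maximize $\opt(\dist)$ over triangular large-market instances normalized so that $\up(\dist)=1$ (the ratio is scale-invariant). By \cref{lem:spp_for_triangle} and \cref{eq:calculation:2,eq:calculation:3}, $\opt(\dist)$ is a functional of the first-order statistic $\os_1$; and combining \cref{eq:ap_os} with \cref{prop:order_statistics} in the limit $\epsilon\to0$ gives $\up(\pay,\dist)=\pay\cdot g_k(\os_1(\pay))$ for every $\pay\ge\tfrac1k$, where $g_k(u)=k-\sum_{t=0}^{k-1}\tfrac{k-t}{t!}\,u\,(-\ln u)^t$ is exactly the bracketed factor of \cref{eq:calculation:4}. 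I would first record the elementary facts that $g_k(u)$ equals the expectation of $\min\{\mathrm{Poisson}(-\ln u),\,k\}$, hence is continuous and strictly decreasing on $(0,1]$ with $g_k(0^+)=k$ and $g_k(1)=0$, and that $\up(\pay,\dist)\le \pay k<1$ automatically whenever $\pay<\tfrac1k$.

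It follows that the constraint $\up(\dist)\le1$ is \emph{pointwise}: it is equivalent to $\os_1(\pay)\ge\os_1^{*}(\pay):=g_k^{-1}(1/\pay)$ for all $\pay\ge\tfrac1k$, and $\os_1^{*}$ is by construction the unique solution of \cref{eq:calculation:4}; one checks that $\os_1^{*}$ is itself a legitimate first-order-statistic function (nondecreasing, tending to $1$, with $\pay(-\ln\os_1^{*}(\pay))$ nonincreasing), since $g_k^{-1}$ is decreasing. The remaining step is to show that the pointwise-minimal admissible choice $\os_1\equiv\os_1^{*}$ maximizes $\opt$. The argument I would run is a monotonicity-plus-exchange one: lowering $\os_1(\pay)$ toward $\os_1^{*}(\pay)$ amounts to enlarging the instance — appending vanishing triangular agents with monopoly value just above $\pay$, which raises the cumulative monopoly revenue $R=\pay(-\ln\os_1)$ there and can only weakly increase $\opt$, since a revenue-optimal mechanism may ignore the extra agents — so whenever the anonymous-pricing constraint has slack at some price one can strictly enlarge $\opt$; performing this at every such price and passing to the large-market limit drives $\os_1$ down to $\os_1^{*}$. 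For that instance $\up(\pay,\dist)=\pay\,g_k(\os_1^{*}(\pay))=1$ for all $\pay\ge\tfrac1k$, which is precisely \cref{eq:calculation:4}.

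The hard part will be this extremality step. Unlike $\up$, the optimal revenue is \emph{not} a pointwise function of $\os_1$: via \cref{eq:calculation:3} it depends on $\os_1$ through the nonlocal functional $Q(x)=\int_x^{\infty}-R'(z)/z\,\dd z$, so ``smaller $\os_1$ everywhere $\Rightarrow$ larger $\opt$'' cannot be invoked termwise. I expect the bulk of the work to consist of verifying that the admissible enlargements — those respecting both $\os_1\ge\os_1^{*}$ and monotonicity of $\os_1$ (equivalently of $R$) — never decrease $Q(\pay)$ at any price, and hence never decrease $\opt$, given its monotonicity in each $Q(\pay)$ inherited from the $\spp$/order-statistics representation; and in carefully handling the large-market limit, since the extremal instance is a genuine limit whose cumulative monopoly revenue does not vanish at infinity (the truncated-Pareto tail), so the perturbation has to be carried out at finite $\epsilon$ and the conclusion passed to the limit using the estimates in \cref{prop:triangle_is_worst_case} and \cref{prop:order_statistics}. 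Once extremality is in place, rearranging $\up(\pay,\dist)=1$ into the displayed implicit equation is immediate.
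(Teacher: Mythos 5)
Your proposal matches the paper's proof: the paper likewise argues that at the worst case the anonymous-pricing constraint must bind at every price $p \ge \frac{1}{k}$ (via exactly the perturbation you describe --- adding infinitesimal agents to lower $\os_1$ pointwise weakly increases $\opt$, since the optimal mechanism may simply ignore them, while preserving $\up(\dist)=1$) and then substitutes the order-statistics formula of \cref{prop:order_statistics} into \cref{eq:ap_os} and rearranges. The ``hard part'' you anticipate is dispatched by your own parenthetical observation that enlarging the instance directly and weakly increases $\opt$, so there is no need to track the nonlocal dependence of $\opt$ on $\os_1$ through $Q$; the paper handles this step in a single sentence.
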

\begin{proof}[Proof of \cref{lem:worst_os}]
Subject to the constraint that $\up(\dist) = 1$, the worst case instance satisfies the condition that 
$\up(\pay,\dist) = 1$ for all $p\geq \frac{1}{k}$. 
This is because if there exists $p \geq \frac{1}{k}$ such that $\up(\pay,\dist) < 1$, we can increase the first order statistic at $p$ such that the optimal revenue increases, without violating the condition that $\up(\dist) = 1$.
Therefore, for any $x\geq \frac{1}{k}$, we have
\begin{align*}
1 &= \up(x,\dist) = x \cdot \sum_{j \in [1 : k]} (1 - \os_{j}(x)) 
= x \cdot \left(k-\sum_{j \in [1 : k]} \sum_{t \in [0: j - 1]} \frac{1}{t!} \cdot \os_{1}(x) \cdot (-\ln \os_{1}(x))^{t}\right)\\
& = x \cdot \left(k - \sum_{t \in [0 : k - 1]} \frac{k - t}{t!} \cdot \os_{1}(x) \cdot (-\ln \os_{1}(x))^{t}\right)
\end{align*}
where the third equality holds by \cref{prop:order_statistics}, 
and the last equality holds by rearranging the terms.
\end{proof}

Combining \Cref{eq:calculation:2,eq:calculation:3,eq:calculation:4}, we can infer that the first-order value CDF $\os_{1}(x)$ has the support $\supp(\os_{1}) = [\frac{1}{k}, +\infty]$. Moreover, combining them with L'H\^{o}pital's rule further implies that
\begin{align*}
    & R(+\infty) = 1,
    \qquad
    Q(+\infty) = 0,\\
    & R(1 / k) = +\infty,
    \qquad
    Q(1 / k) = +\infty.
\end{align*}

\paragraph{Ex-Ante Relaxation.}
The notion of {\ExAnteRelaxation} $\EAR(F)$ was introduced in \cite{CHK07} to provide a tractable upper bound for the optimal revenue and was then extensively studied in \cite{A14,Y11,AHNPY19,JJLZ22}. 
Specifically, it relaxes the ex post feasibility constraint of selling at most $k$-units, to the ex ante constraint that the total probability of winning the items from all agents cannot exceed $k$. 
That is, letting $r_i(q_i)$ be the optimal revenue from agent $i$ given distribution $F_i$ subject to the constraint that the probability agent $i$ wins a unit is $q_i$, we have 
\begin{align*}
\EAR(F) = \max_{q: \sum_{i\in[n]} q_i\leq k} \sum_{i\in[n]} r_i(q_i).
\end{align*}
By definition, for any distribution $\dist$, we have 
\begin{align*}
    \EAR(F) \geq \opt(F).
\end{align*}

In the $k$-unit setting, when focusing on instances with triangular distributions, the ex ante relaxation takes a particularly simple form. Specifically,\footnote{If a triangular instance $F$'s total quantile is less than $k$, i.e., $\sum_{i \in [n]} q_{i} < k$, then $Q^{-1}(k)$ is not well-defined; instead, $\EAR(F) \eqdef \sum_{i \in [n]} v_{i} q_{i}$. (However, for the worst-case instance $F^{*}$, this issue does not arise.)}
\begin{align*}
    \EAR(F) \eqdef R(Q^{-1}(k)).
\end{align*} 
The intuition of this representation is simple. 
To maximize $\sum_{i\in[n]} r_i(q_i)$ subject to an ex ante constraint on selling probabilities, it is optimal to sell to agents with the highest $\mrev_i = \frac{\mrev_i}{\mquant_i}$. 
The cumulative monopoly revenue function $R$ exactly captures the revenue contribution from those agents.

\paragraph{Asymptotic Upper Bounds.}
To upper bound $\EAR(F^{*})$, we first introduce two auxiliary parameters:
\begin{align}
\alpha &\eqdef Q^{-1}(k), \label{eq:calculation:5}\\
\beta &\eqdef \os_{1}^{-1}(e^{-k}). \label{eq:calculation:6}
\end{align}
Note that by \Cref{eq:calculation:2}, we have 
\begin{align}\label{eq:Rbeta}
R(\beta) = \beta \cdot (-\ln \os_{1}(\beta)) = k\beta.
\end{align}

Next, we deduce a $\beta$-based upper-bound for $\EAR(F^{*})$.
Note that 
\begin{align*}
k \overset{\eqref{eq:calculation:5}}{=} Q(\alpha)
\overset{\eqref{eq:calculation:3}}{=} \int_{\alpha}^{+\infty} \frac{(-R'(z))}{z} \cdot \d z
\ge \int_{\alpha}^{\beta} \frac{(-R'(z))}{\beta} \cdot \d z
= \frac{R(\alpha) - R(\beta)}{\beta}.
\end{align*}
The inequality holds since in the case where $\beta < \alpha$, we observe that $\LHS \ge 0 \ge \RHS$ as $R(x)$ is a decreasing function. 
The case where $\beta \ge \alpha$ is obvious.
This implies that 
\begin{align}
\EAR(F^{*})
= R(\alpha) \le k \beta + R(\beta)
\overset{\eqref{eq:Rbeta}}{=} 2k \beta.\label{eq:calculation:7}
\end{align}

Secondly, we actually have an explicit formula for $\beta$:
\begin{align}
    \beta = \frac{1 / k}{1 - k^{k} / (k! \cdot e^{k})}
    \le \frac{1 / k}{1 - 1 / \sqrt{2\pi k}}.
    \label{eq:calculation:8}
\end{align}
Here is the detailed deduction:
\begin{align*}
    1
    & \overset{\eqref{eq:calculation:4}, \eqref{eq:calculation:6}}{=}
    \beta \cdot \left(k - \sum_{t \in [0 : k - 1]} \frac{k - t}{t!} \cdot e^{-k} \cdot k^{t}\right)\\
    & = \beta k \cdot \left(1 - \sum_{t \in [0 : k - 1]} \frac{1}{t!} \cdot e^{-k} \cdot k^{t} + \sum_{t \in [1 : k - 1]} \frac{1}{(t - 1)!} \cdot e^{-k} \cdot k^{t - 1}\right)\\
    & = \beta k \cdot \left(1 - \frac{e^{-k} \cdot k^{k - 1}}{(k - 1)!}\right)\\
    & = \beta k \cdot \left(1 - \frac{k^{k}}{k! \cdot e^{k}}\right)
\end{align*}

Combining \Cref{eq:calculation:7,eq:calculation:8}, we have
\begin{align}
    \opt(F^{*})
    \le \EAR(F^{*})
    \le \frac{2}{1 - k^{k} / (k! \cdot e^{k})}
    \le \frac{2}{1 - 1 / \sqrt{2\pi k}}.
    \label{eq:UB-asymptotic}
\end{align}

\subsection{Universal Upper Bounds}
\label{subsec:universal_bound}

Let us rewrite $F^{*(k)} = F^{*}$ to emphasize the specific choice of $k \in \NN$. In this subsection, we will show that
\begin{align}
    & \opt(F^{*(k)}) \le \opt(F^{*(1)}) \approx 2.4762,
    && \forall k \in \NN.
    \label{eq:UB-universal}
\end{align}

\paragraph{Case~$k = 1$.}
We can formulate $\os_{1}(x)$, $R(x)$, and $Q(x)$ explicitly, as follows:
\begin{align*}
    \os_1(x) & \overset{\eqref{eq:calculation:4}}{=} 1 - \tfrac{1}{x}\\
    R(x) & \overset{\eqref{eq:calculation:2}}{=} -x \ln( 1 - \tfrac{1}{x}) = \sum_{t = 1}^{+\infty} \tfrac{1}{t} \cdot x^{-(t - 1)}\\
    Q(x) & \overset{\eqref{eq:calculation:3}}{=} \int_{x}^{+\infty} \frac{(-R'(z))}{z} \cdot \d z = \sum_{t = 1}^{+\infty} \frac{t - 1}{t^{2}} x^{-t}
\end{align*}
The cumulative distribution function of the \textit{first-order virtual value} $\Phi_{1}(x)$ satisfies the following identity:
\begin{align*}
    \Phi_{1}(x)
    = \prod_{i \in [n]: v_{i} \ge x} (1 - q_{i})
    = \prod_{i \in [n]: v_{i} \ge x} e^{-q_{i}}
    = e^{-Q(x)}
\end{align*}
where the first inequality holds for any triangular instance $F$, and the second inequality holds under the large market assumption. 
Moreover, for $x \in [0, 1]$, we haver $Q(x) = +\infty$, which further implies that $\Phi_{1}(x) = 0$.

Thus, the optimal revenue $\opt(F^{*(1)})$ is given by
\begin{align*}
    \opt(F^{*(1)})
    & = \int_{0}^{+\infty} (1 - \Phi_{1}(x)) \cdot \d x + R(+\infty)\\
    & = 2 + \int_{1}^{+\infty} (1 - e^{-Q(x)}) \cdot \d x\\
    & \approx 2.4762.
\end{align*}

\paragraph{Case~$k = 2$.}
Rearranging \Cref{eq:calculation:4} gives
\begin{align}
    & x = \frac{1}{2 - 2\os_{1} - \os_{1} \cdot (-\ln \os_{1})}
    \label{eq:calculation:9}\\
    \implies\quad
    & \frac{\d x}{\d \os_{1}}
    = \frac{1 + (-\ln \os_{1})}{(2 - 2\os_{1} - \os_{1} \cdot (-\ln \os_{1}))^{2}}.
    \label{eq:calculation:10}
\end{align}
Then, we can deduce the following identify for the functions $Q(x)$ and $\os_{1}(x)$:
\begin{align*}
    Q
    & = \int_{x}^{+\infty} \frac{(-R'(z))}{z} \cdot \d z\\
    & \overset{\eqref{eq:calculation:2}}{=} \int_{x}^{+\infty} \frac{z \cdot \frac{\os'_{1}(z)}{\os_{1}} - (-\ln \os_{1})}{z} \cdot \d z\\
    & = \big(\ln \os_{1}\big)\big|_{x}^{+\infty} - \int_{x}^{+\infty} \frac{(-\ln \os_{1})}{z} \cdot \d z\\
    & = (-\ln \os_{1}) - \int_{x}^{+\infty} \frac{(-\ln \os_{1})}{z} \cdot \frac{\d z}{\d \os_{1}} \cdot \d \os_{1}\\
    & \overset{\eqref{eq:calculation:9}, \eqref{eq:calculation:10}}{=} (-\ln \os_{1}) - \int_{x}^{+\infty} \frac{(-\ln \os_{1}) \cdot (1 + (-\ln \os_{1}))}{2 - 2\os_{1} - \os_{1} \cdot (-\ln \os_{1})} \cdot \d \os_{1}.
\end{align*}
In this format, we can numerically obtain $\alpha = Q^{-1}(2) \approx 0.5206$, $\os_{1}(\alpha) \approx 0.012390$, and
\begin{align*}
    \EAR(F^{*(2)}) = R(\alpha) \approx 2.2860.
\end{align*}

\paragraph{Cases~$k = 3, 4$.}
By adapting the analysis in the case $k = 2$ to the cases $k = 3, 4$, we can deduce that $\opt(F^{*(3)}) \le \EAR(F^{*(3)}) \approx 2.1914$ and $\opt(F^{*(4)}) \le \EAR(F^{*(4)}) \approx 2.1432$.

\paragraph{Cases~$k \ge 5$.}
Following \Cref{eq:UB-asymptotic}, we have $\opt(F^{*}) \le \frac{2}{1 - 1 / \sqrt{10\pi}} \approx 2.4343$.

\medskip
\noindent Combining all cases, the worst case approximation ratio occurs at $k=1$, with the worst case ratio being $\approx 2.4762$.

\bibliographystyle{apalike}
\bibliography{reference}

\appendix
\section{Missing Proofs}
\label{apx:proofs}

\begin{proof}[Proof of \cref{cor:loose_gap}]
The lower bound is immediate, as regular distributions are special cases of quasi-regular distributions. 
Next, we prove the upper bound. 

First, when the value distributions are quasi-regular, the worst case approximation between \sppname~pricing and \apname~pricing occurs with triangular distributions. 
Moreover, since triangular distributions are regular, \cref{prop:loose_gap} implies that the worst case gap between \sppname~pricing and \apname~pricing is at most $O(\log k)$. 
Finally, since the worst case gap between the optimal and \sppname~pricing is $1+\frac{\Theta(1)}{\sqrt{k}}$ \citep{Y11}, 
the worst case gap between the optimal and \apname~pricing is at most $O(\log k)$.  
\end{proof}

\begin{proof}[Proof of \cref{cor:large_cdf}]
Suppose by contradiction that there exists a price $\pay\geq \frac{1}{k}$ and an agent~$i$ such that $\dist_i(\pay) < 1 - \epsilon \cdot\gamma k\log k$.
This implies that the expected revenue from agent $i$ is strictly larger than $\epsilon \cdot\gamma \log k$ by posting a price $\pay\geq \frac{1}{k}$. 
This contradicts to the $\epsilon$-large markets assumption. 
\end{proof}

\subsection{Lower Bounds}
\label{subapx:lowerbound}
In this section, we show that the worst case gap between \apname~pricing and the optimal revenue is at least 2, even under the large market assumption. 
Specifically, consider the following instance, parameterized by a sufficiently small constant $\delta \in (0,1)$.
The agents are divided into two groups. 
In the first group, there are $C_{\delta}$ agents. 
Each agent in this group has a triangular value distribution 
$\tri_{\delta,\delta^2}$. Note that these agents have a monopoly reserve $\frac{\delta}{\delta^2}=\frac{1}{\delta}$, 
and the parameter $C_{\delta}$ is chosen such that by posting a price $\frac{1}{\delta}$ to those agents, the expected number of items sold to those agents is $\delta$. 
In the second group, there are $\frac{1}{\delta^{10}}$ agents. 
Each agent in this group has a triangular value distribution 
$\tri_{\frac{\delta}{k},\delta}$. 
The monopoly reserve for those agents is $\frac{1}{k}$. 
Moreover, by posting a price of $\frac{1}{k}$ to those agents, the probability that all items are sold converges to $1$ as $\delta\to 0$. 

In this constructed instance, the revenue from anonymous pricing is at most $1$. Moreover, when $\delta\to 0$, 
the optimal revenue converges to $2$ by posting a price of $\frac{1}{\delta}$ to the first group, and selling the remaining items to the second group at a price of $\frac{1}{k}$. 
The revenue gap is $2$ in the limit.

\end{document}